\newtheorem{thm}{Theorem}[section]
\theoremstyle{remark}
\theoremstyle{definition}
\newtheorem{claim}[thm]{Claim}
\newcommand{\calC}{\mathcal{C}}
\newcommand{\bea}{\begin{eqnarray}}
\newcommand{\eea}{\end{eqnarray}}
\newcommand{\be}{\begin{equation}}
\newcommand{\ee}{\end{equation}}
\newcommand{\ba}{\begin{align}}
\newcommand{\ea}{\end{align}}
\newcommand\rref[1]{(\ref{#1})}
\newlength{\slength}
\newcommand{\vn}{\vec{n}}
\newcommandx{\jamie}[2][1=]{\todo[author=Jamie,linecolor=red,backgroundcolor=red!25,bordercolor=red,#1]{#2}}
\title{The Spectrum of Boundary States in Symmetric Orbifolds}
\author{Alexandre Belin,$^a$ Shovon Biswas,$^b$ James Sully$^b$}
\affiliation[\,a]{CERN, Theory Division, 1 Esplanade des Particules, Geneva 23, CH-1211, Switzerland}
\affiliation[\,b]{Department of Physics and Astronomy, University of British Columbia,\\
6224 Agricultural Road, Vancouver, B.C.\ V6T 1Z1, Canada.}
\emailAdd{a.belin@cern.ch, shovon432@gmail.com, jamie.sully@gmail.com}
\abstract{We give an explicit construction of the complete set of Cardy boundary states that respect the extended chiral algebra for symmetric product orbifolds. The states are labelled by a choice of seed theory boundary states as well as a choice of representations of the symmetric group. At large $N$, we analyze the BCFT data which is relevant for holography, namely the boundary entropy and the one-point functions of single-trace operators. In some cases, typical boundary states are compatible with a bulk description in terms of an End-of-the-World brane along with backreacted matter fields. We discuss the significance of these results for the AdS/BCFT correspondence.}
\begin{document}

\begin{flushright}
\hfill{\tt CERN-TH-2021-149}
\end{flushright}

\maketitle

\section{Introduction}\label{sec:intro}

What does a typical conformal boundary condition look like for a holographic conformal field theory? 
Do most boundary conditions have a large number of degrees of freedom that cap of the dual geometry deep in the infrared, leading to a large bulk geometry? Or does a typical boundary condition eat up most of the bulk degrees of freedom and destroy all but a sliver of the dual bulk geometry? And are there any limits on how large or how small the dual geometry can become?

These questions are of great interest with the recent upsurge in applications of holographic boundary conformal field theory (BCFT). BCFTs have been used as examples of a gravitational system (the bulk dual of the boundary) coupled to an auxiliary bath (the ambient CFT) to study black hole evaporation and entanglement islands \cite{1908.10996,1910.12836,2006.04851,2010.00018,2012.04671,2105.00008,Hollowood:2021wkw,Geng:2021iyq,Ageev:2021ipd}.
BCFTs have likewise seen renewed interest as examples of cosmological spacetimes (the bulk dual of the boundary) embedded in a higher-dimensional spacetime (the bulk dual of the ambient CFT) \cite{1810.10601,1907.06667,2008.02259,2102.05057}. 
In both of these settings, the localization of gravity to a gravitational `End-of-the-World' (ETW) brane describing the boundary condition requires that the boundary condition have a large number of degrees of freedom and the ambient bulk geometry to be large.  
Holographic boundary conditions have also been used as examples of black hole microstates \cite{1803.04434,1810.10601,Kourkoulou:2017zaj} and there is a comparably longer history of their use in studying quantum quenches (see \cite{1603.02889} for a review).

A holographic BCFT is generally regarded in two ways. 
In bottom-up constructions, it has been proposed that a BCFT should be dual to a bulk geometry that is cut-off by a semiclassical ETW brane \cite{hep-th/0011156,hep-th/0105132,1105.5165,1108.5152}. 
However, there are few examples where this picture has been proven to be a consistent approximation (see \cite{1108.5152} for one example, though). 
In explicit top-down constructions \cite{1111.6912,1205.5303,1304.4389,1312.5477,0705.0022,0705.0024,1106.1870,0804.2902,0807.3720}, where the bulk dual is on firmer ground, the bulk geometry usually caps off smoothly when a cycle pinches off in the higher-dimensional geometry. 
These geometries are not necessarily well-approximated by a thin, semi-classical ETW brane.

The question of what a typical boundary condition looks like in a holographic CFT is difficult to answer. 
In particular, in a generic large $C_T$ CFT we don't understand how to construct consistent boundary conditions let alone understand typical properties of the space of all boundary conditions. 
To make progress, we must choose a less ambitious goal.\footnote{One can also ask the inverse question: assuming an ETW brane model, what does it imply about statistical properties of boundary states? This question is considered in \cite{Miyaji:2021ktr}.}  

One possibility is to study particular examples of supersymmetric large $C_T$ conformal field theories where we have much greater control and calculational power---namely those considered in the top-down models. 
In this direction, in \cite{Raamsdonk:2020tin,VanRaamsdonk:2021duo} the authors studied 4D $\mathcal{N}=4$ supersymmetric Yang-Mills theory coupled to a 3D superconformal field theory on its boundary (building on previous work in \cite{0804.2902,0807.3720,0705.0022,0705.0024,1106.1870,1106.4253,1206.2920,1812.00923,2005.10833}). 
In this case the supergravity dual of the BCFT is known explicitly. 
They found that the considered boundary conditions can have arbitrarily many degrees of freedom with an arbitrarily large bulk geometry, but they do not explicitly consider what is typical among the conditions considered.

Here, we take a different approach by studying two dimensional CFTs where Virasoro symmetry gives us greater analytical control. 
However, studying the boundary conditions of a generic large-$c$ CFT is still beyond our reach because the questions we would like to answer are not kinematically fixed even with the help of Virasoro symmetry. Instead we will focus on symmetric orbifold CFTs, 
\begin{equation}
    \mathcal{C}^{\otimes N}/S_N
\end{equation}
for some fixed seed CFT $\calC$, in the limit of large $N$. This CFT has a large central charge $N c_{\scriptscriptstyle \calC}$ and shares many characteristics with a holographic 2D CFT. For example, the finite temperature partition function is universal at large $N$ and follows that of AdS$_3$ gravity \cite{Keller:2011xi}, while the correlation functions obey large $N$ factorization  \cite{Pakman:2009zz}. There are of course differences with true holographic theories; for instance the growth of states in symmetric orbifolds is exponential in the energy, much faster than the growth dictated by a supergravity dual \cite{Belin:2016dcu,Belin:2018oza}. Moreover, the theories are not chaotic and don't display exponential growth in out-of-time-ordered correlation functions \cite{Belin:2017jli}, which can be expected since symmetric orbifolds are free discrete gauge theories. This vast landscape of large $c$ CFTs is thus the perfect place to start a classification, as they give us a lot of control while mimicking as best they can the properties of a true holographic CFT.\footnote{There are also other orbifold theories known as permutation orbifolds where one quotients by a subgroup of $S_N$ that display similar properties, at least for some choices of subgroups \cite{Haehl:2014yla,Belin:2014fna,Belin:2015hwa,Keller:2017rtk}. We will not discuss their boundary states in this paper, but it would be a straightforward generalization. }

Even when the seed theory has finitely many conformal boundary conditions (as for minimal models), we expect the orbifold theory to have infinitely many boundary conditions corresponding to the infinitely many Virasoro primary operators it contains. 
In this paper, we will characterize boundary conditions in the orbifold theory that also respect the extended symmetry algebra of the theory. 
That is, we look for boundary states annihilated not just by
\begin{equation}
    T(z) - \overline{T}(\overline{z})
\end{equation}
but also by
\begin{equation} \label{chiralalgebracond}
    W^i(z) - \Omega \overline{W}^i(\overline{z})
\end{equation}
for generators $W^i,\overline{W}^i$ of the extended symmetry algebra, for some choice of $\Omega$ an automorphism of the algebra that fixes the stress tensor of the orbifold theory.\footnote{In this paper, we will consider only the trivial automorphism $\Omega=1$ but we comment on more general $\Omega$ in the discussion.}  
In particular, the boundary states we construct in this paper will all be built out of products of the Ishibashi states of the seed theory and corresponding Ishibashi states in the twisted sectors.

We are able to characterize the complete set of these boundary states. The boundary conditions are classified by a vector of multiplicities $\vn$ labelling how many times each seed boundary condition appears in the state, as well as a vector of permutations (really conjugacy classes) $\vec{r}$ labelling the contribution of each `twisting' of the seed states. We find that the consistent boundary states are
\begin{align}
   | \vec{n},\vec{r} \rangle = \frac{1}{\sqrt{N!}}\sum_{h \in S_N}\prod_{i=1}^{n_b} \left(  \sum_{g_i \in S_{n_i}(N_i) } \frac{\chi^{r_i}(g_i)}{n_i!} |(a_i)_{h g_i h^{-1}} \rangle \right) \,,
   \label{eq:general-state-intro}
\end{align}
where the $\chi^{r_i}$ are characters of the symmetric group and where the states $\ket{(a_i)_g}$ are Ishibashi states in the twist $g$ sector. A more precise definition of each component is given in section 3.1 and below equation \rref{eq:general-state}.

We prove that these states satisfy Cardy's consistency conditions and are thus good boundary states. This entails showing that the partition function in the open string channel has an expansion in terms of integer coefficients, which we prove. We also explain that these coefficients have a nice interpretation in terms of counting representations of subgroups of the symmetric group a certain number of times, determined by the branching rules of the characters $\chi^{r_i}$.

Given the states \rref{eq:general-state-intro}, we proceed to analyze the BCFT data relevant for holography: the boundary entropy $g$ and the one-point functions of single-trace operators $p_O$ both in the untwisted and twisted sectors. We are most interested in typical boundary states for a fixed seed CFT, and typicality ends up being highly sensitive to whether the seed theory has a finite or infinite number of boundary states $n_b$. For \textit{typical} boundary states, we find the following results:

  \begin{align}
  n_b \textrm{ is finite}& \quad
    \begin{cases}
      g &\sim N \\
      p_{\textrm{untw}} &\sim \sqrt{N} \\
      p_{\textrm{tw}} &\sim N^0
    \end{cases} \\
    \nonumber\\
        n_b \textrm{ is infinite}&  \quad
    \begin{cases}
      g &\sim N \log N \\
      p_{\textrm{untw}} &\sim \sqrt{N} \\
      p_{\textrm{tw}} &= 0\,.
    \end{cases}
  \end{align}
From this data, we can conclude that typical boundary states in symmetric orbifolds of infinite (e.g. irrational) seed CFTs do not appear to have a nice bulk dual, since the tension of the ETW brane (or depth of the bulk geometry) is super-Planckian. For finite seed theories (e.g. a minimal model), the boundary entropies and one-point functions are consistent with having a macroscopic bulk dual.
However, it is a much stronger statement to conclude that a semiclassical bulk dual actually exists. 
For example, recent work \cite{Reeves:2021sab} has shown that the existence of a bulk causal structure puts severe constraints on the spectrum of boundary operators. This data could in principle be computed from our methods but we leave it for future work.

The outline of this paper is as follows. In Section \ref{sec:review} we give a brief review of necessary facts about boundary conformal field theories and symmetric orbifold theories. 
In Section \ref{sec:bdrystates} we derive a complete set of boundary states in our symmetry class for the symmetric orbifolds.
In Section \ref{sec:branetension} we calculate the distribution of boundary entropies and single-trace one-point functions in the large-$N$ limit.
Finally, in Section \ref{sec:discussion}, we end with some concluding remarks and open questions.

\textbf{Note added:} While finishing this work, we became aware of related work being done in \cite{gaberdiel_upcoming}. We have coordinated the release of our papers.

\section{Review of BCFT and Symmetric Orbifolds}\label{sec:review}

In this section, we review some aspects  of two-dimensional conformal field theory that will be relevant for this paper. We start by a review of conformal boundary conditions, introducing Ishibashi states and Cardy's consistency conditions. We then turn to symmetric orbifolds and give a brief review of their construction.
\subsection{Review of BCFT}\label{sec:BCFTrev}

Here, in reviewing some basic aspects of boundary conformal field theory in $d=2$, we will mostly follow  \cite{Cardy:2004hm}. 

To define a conformal field theory on a space with a boundary, we must choose a set of consistent boundary conditions. 
It's particularly useful to choose boundary conditions that leave a maximal subgroup of the conformal group unbroken, the so-called \textit{conformal boundary conditions}. 
Conformal boundary conditions impose that the off-diagonal components of the stress-tensor for the directions parallel and perpendicular to the boundary vanish. In $d=2$, this condition can be written in the form
\be \label{confbc}
T(z)=\bar{T}(\bar{z})
\ee
for $z$ on the boundary along the real axis (we take the CFT to be defined on the upper half plane). 

Using the standard logarithmic transformation, one can map the upper half-plane to an interval with a boundary on either end (and an infinitely long Euclidean time direction). 
When we impose different boundary conditions on the positive and negative real axis, we get different boundary conditions on either end of the interval. 
If we quotient the radial direction as well, our theory then lives on a finite cylinder with a boundary at either end.   
If we choose to treat the longitudinal direction of our cylinder as Euclidean time, we can interpret this as the matrix element
\be \label{transitionamplitude}
\bra{a}e^{-\beta H}\ket{b} \,,
\ee
between \textit{boundary states} $a$ and $b$ living in the original CFT Hilbert space\footnote{Technically, the boundary states themselves are non-normalizable.}. Here, $H$ is the standard CFT Hamiltonian on the circle.  

When we view our boundary conditions as states on the circle, the constraint on the stress tensor \rref{confbc} can be expressed as

\be \label{Ishibashicondition}
L_{n}\ket{a}=\bar{L}_{-n}\ket{a} \,.
\ee
It is easy to see that a primary state will not satisfy this condition. However, for a scalar primary there is unique solution given by
\be
\ket{h}\rangle=\sum_{n=0}^{\infty} \sum_{j=1}^{d_h(n)} \ket{h,n,j,\bar{n},\bar{j}} \,.
\ee
Here $k$ labels the Virasoro descendant level, and $d_h(n)$ the number of states at that level. The states $\ket{h,n,j,\bar{n},\bar{j}}$ are an orthonormal basis of Virasoro descendants of the primary operator with $h=\bar{h}$. These states are known as Ishibashi states. While they solve the conformal boundary condition, they are not yet consistent boundary states.

Proper boundary states are constrained by Cardy's consistency conditions. These conditions enforce that any two boundary conditions $a,b$ give rise to a proper Hilbert space of states living on the interval stretching between them. In other words, they require the partition function of the theory with boundary conditions $a,b$ to be a sum over states with integer multiplicities. 
We call these proper boundary states \textit{Cardy states}.

In more detail, we consider a strip of width $L$  with boundary condition $a$ and $b$ on the sides. Taking the time direction $t\sim t+T$ along the length of the strip gives a cylinder of circumference $T$. The path integral on this cylinder computes a thermal partition function for a Hilbert space $\mathcal{H}_{ab}$ on an interval with boundary conditions specified at both ends
\begin{eqnarray}
\label{ann}
    Z_{ab}^{open}&&=\text{Tr}_{\mathcal {H}_{ab}} q^{L_0-\frac{c}{24}},\quad q=e^{2\pi i \tau}\nonumber\\
    &&=\sum_h N^h_{ab}\chi_h(q).
    \end{eqnarray}
Here $\tau=\frac{iT}{2L}$ is the modular parameter of the cylinder.  Since the conformal boundary condition $T(z)=\bar T(\bar z)$ on the real line $z=\bar z$ eliminates half of the Virasoro generators, the partition function (\ref{ann}) depends only on the chiral characters. Because this partition function is a trace over states, Cardy's conditions require $N^h_{ab}$ to be integers. Moreover, we will demand that the vacuum state is unique and appears only when the two boundary states are identical, so that $N^0_{ab} =\delta_{ab}$\footnote{The fact that the vacuum only appears when the boundary states are identical means that these states are orthonormal when appropriately regulated.}. The partition function (\ref{ann}) is sometimes called the `open string' partition function and we will adopt this nomenclature here.

By interchanging the role of the space and time coordinates, we can interpret (\ref{ann}) as a transition amplitude between two boundary states $\ket{a}$ and $\ket{b}$, as in \rref{transitionamplitude}. This gives the `closed string' partition function written as
\begin{eqnarray}\label{closd}
    Z_{ab}^{closed}&&=\bra{a}e^{-L H}\ket{b}\nonumber\\
    &&=\bra{a}\Tilde{q}^{\frac{1}{2}(L_0+\Bar{L}_0-\frac{c}{12})}\ket{b}\quad\tilde q=\frac{-2\pi i}{\tau},
\end{eqnarray}
where  $H=\frac{2\pi}{T}(L_0+\Bar{L}_0-\frac{c}{12})$ is the  Hamiltonian of the CFT defined on a circle of size $T$. 

A generic superposition of Ishibashi states will correspond to a closed string amplitude, but will in general not admit an interpretation as a well-defined partition function in the open string sector. To obtain a Cardy state, one must tune the linear combination. This is similar to modular invariance for rational CFTs where one appropriately adds the right characters to the partition function so as to obtain a good modular invariant partition function.

Cardy states are thus written as the linear combinations  \cite{ishibashi1989boundary}
\begin{equation} \label{superpositionIshi}
    \ket{a}=\sum_{h}\langle\bra{h}\ket{a}\ket{h}
\rangle\equiv\sum_h a_h\ket{h}
\rangle\,.
\end{equation}
Performing a modular $S$-transform and comparing (\ref{closd}) with (\ref{ann}) gives the Cardy conditions
\begin{equation}
    \sum_{k} a^*_kb_kS^h_k=N^h_{ab} \,. \label{cardy}
\end{equation}
For RCFTs a solution to the Cardy condition (\ref{cardy}) is given by the following choice of the constants
\begin{equation}
    a_h=\frac{S_{ah}}{\sqrt{S_{0h}}} \,.
\end{equation}

\subsection*{Boundary Entropy} 

A particular quantity of interest is the boundary entropy, which measures the correction to the asymptotic density of states for the BCFT. In particular, in the large temperature limit of the open string partition function, when the closed string $\beta \rightarrow \infty$, we have 
\begin{equation}
    Z^{\mathrm{open}}_{ab} = \braket{a|0} \braket{0|b} e^{-\beta c/6} + \ldots
    \label{eq:large-temp}
\end{equation}
The boundary entropy accounts for the scaling contribution of the overlap of the boundary states with the vacuum. We define
\begin{equation}
    g_a=\log\langle a | 0\rangle \,.
\end{equation}

When the bulk dual of a BCFT is given by empty AdS terminated by an ETW brane, the boundary entropy can be simply related to the tension and location of the brane \cite{Takayanagi:2011zk}. 
More generally, whenever a BCFT has a bulk dual, the boundary entropy gives a measure of the depth of the bulk geometry; it can be related to the distance to the end of the geometry in the IR. 
In particular, for an interval of length $L$ containing the boundary, the entanglement entropy takes the universal form
\begin{equation}
\label{eq:ee}
S = \frac{c}{6} \log {\frac{2 L}{\epsilon}} + \log g_b \; .
\end{equation}
If we have a good bulk dual where the Ryu-Takayanagi formula can be used to compute this entanglement entropy, then we see that the boundary entropy determines the length of a corresponding bulk geodesic that terminates at the IR end of the bulk geometry. 
This argument extends easily to the case that the bulk dual has extra compact dimensions that pinch off in the IR. Instead of a geodesic length, we are simply measuring the volume of a surface that wraps the extra compact directions using higher-dimensional Planck units. It remains true that the boundary entropy should give us an effective notion of the depth of the bulk dual. 

Understanding the spectrum of boundary states in symmetric orbifolds and their respective boundary entropies will be one of the main tasks of this paper.

\subsection{Review of Symmetric Orbifolds}\label{sec:SymNrev}

In this section, we will review the features of symmetric orbifolds that will be relevant for this paper. Consider a two-dimensional conformal field theory $\mathcal{C}$, with central charge $c_{\mathcal{C}}$. From this choice of \textit{seed} theory, one can construct many new CFTs. First, one can consider the $N$-fold tensor product $\mathcal{C}^{\otimes N}$, which has a large central charge $N c_{\mathcal{C}}$ when $N$ is large. From the direct product theory, one can then build new CFTs by orbifolding this theory by its global permutation symmetry (or subgroup thereofs). The symmetric orbfiolds are defined as
\be
\mathcal{C}_N \equiv \frac{\mathcal{C}^{\otimes N}}{S_N} \,.
\ee 
The orbifold procedure projects down to the states of the product theory that are invariant under the action of the symmetric group, drastically reducing the number of states. New states are also produced in the process, which are known as twisted sectors since the boundary condition of the fields get twisted by the action of the group. There is one twisted sector per conjugacy class of the symmetric group $S_N$, which are in one to one correspondence with Young diagrams.

The seed theory completely determines the orbifold theory, such that its correlation functions or partition functions are constructed from fundamental building blocks of the seed (which typically involves seed correlation functions on higher genus surfaces). Therefore, once the seed theory is known, the orbifold theory is completely specified. A simple example is the torus partition function of the orbifold theory, which is given by \cite{Ginsparg:1988ui}
\bea\label{orb}
Z_{\mathcal{C}_N}&=&\sum_{\substack{g,h\in S_N \\ gh=hg}} \begin{array}{r}\\  h~  \begin{array}{|c|}\hline ~ \\ \hline \end{array} \\  g~ \end{array} \notag \\
&=& \sum_{[g]} \frac{1}{|C_g|}\sum_{h\in C_g}\begin{array}{r}\\  h~  \begin{array}{|c|}\hline ~ \\ \hline \end{array} \\  g~ \end{array} \,.
\eea
The first sum is over all conjugacy classes and the $g$ are elements of $S_N$ drawn from the different conjugagy classes. This is the sum over twisted sectors. One also needs to sum over the centralizer of $g$ $C_g$ whose elements are labelled by $h$. The box represents a torus partition function, twisted by $h$ and $g$ in the euclidean time and spatial directions. For a given choice of $h$ and $g$, this will split into a product of torus partition functions, some of which will have twisted modular paramaters. A more detailed formula can be found in \cite{Bantay:1997ek}. We will see that the overlaps of boundary states in the orbifold theory follow a similar structure. 

The most useful information one extracts from \rref{orb} is the scaling dimension of operators in the twisted sector. The ground state of the twisted sector becomes a new primary operator, and for a single cycle of length $k$ the weight is given by
\be
h= \frac{c}{24}\left(k-\frac{1}{k}\right) \,.
\ee
Other primary operators in the twisted sector are in one to one correspondence with primary operators of the seed theory. For example, a scalar seed theory operator with weight $h_{\text{seed}}$ leads to a primary operator with weight
\be \label{htwisted}
h_{\text{twisted}}=\frac{c}{24}\left(k-\frac{1}{k}\right)+\frac{h_{\text{seed}}}{k} \,.
\ee
For more complicated twisted sectors involving multiple cycles, one simply adds the weight of each cycle.

It is also important to mention the chiral algebra of symmetric orbifolds. Assuming that the seed theory has algebra $\mathcal{A}$ (we will mostly take $\mathcal{A}=\textrm{Vir}$), the product theory simply has the symmetry $\mathcal{A}^{\otimes N}$. By performing the orbifold, the chiral algebra gets reduced to
\be \label{algebra}
\frac{\mathcal{A}^{\otimes N}}{S_N} \,.
\ee
This chiral algebra would be an important ingredient in using the tools of rational conformal field theory to find the boundary states of the orbifold theory. 
In this paper, however, we will opt to give an explicit construction that does not (directly) use these techniques. 

In the twisted sector, there are also fractional Virasoro modes. These modes are defined as \cite{Burrington:2018upk}
\be \label{fractionalL}
L_{-m/k}=\oint \frac{dz}{2\pi i} \sum_{j=1}^N T^{j}(z) e^{-2\pi i (j-1)/k} z^{1-m/k} \,.
\ee
These modes satisfy the algebra
\be
\left[L_{\frac{n}{k}},L_{\frac{n'}{k}}\right]=\frac{n-n'}{k}L_{\frac{n+n'}{k}}+\delta_{n+n',0} \frac{cN}{12} \left(\left( \frac{n}{k}\right)^2-1\right)\frac{n}{k} \,.
\ee
Some of these modes will generate new primary operators of the full chiral algebra \rref{algebra}, but only a finite number. The other operators generated from these fractional modes will be descendants.

Symmetric orbifolds obey nice properties at large $N$. In the direct product theory, the number of states at fixed energy grows with $N$. In the orbifold theory, there is a finite number of states at any fixed energy as $N\to\infty$ so the large $N$ limit converges. Moreover, the free energy of the theory is universal at large $N$ and mimicks the Hawking-Page phase transition in AdS$_3$ \cite{Keller:2011xi}.\footnote{Note that this is no longer true for higher genus partition functions \cite{Belin:2017nze}.} Correlation functions also display large $N$ factorization \cite{Pakman:2009zz}. These theories can thus serve as good proxies for true holographic CFTs (with some limitations of course).

\section{Boundary States of Symmetric Orbifolds}\label{sec:bdrystates}
In this section, we will construct the boundary states of the symmetric orbifold theory. We will first write a general ansatz for the boundary states, and then proceed to prove that they satisfy the Cardy conditions.

\subsection{An Ansatz for the Boundary States}

We start by considering a seed theory $\mathcal{C}$ and assume that the spectrum of boundary states of the seed theory is known. The simplest case is to take the seed theory to be a Virasoro minimal model, or any other rational CFT since the spectrum of boundary states is better understood. However, we will be more general and also allow for irrational CFTs. We will show that if the seed theory boundary states are known, so are those of the orbifold theory.

Consider the set of boundary states of the seed theory
\be
\ket{a_i} \,, \qquad i=1,... \ ,n_b \,.
\ee
$n_b$ counts the number of boundary states of the seed theory, which formally can be infinite if the seed theory is irrational. Following \rref{closd}, we also have the seed theory overlaps
\be \label{seedoverlaps}
\bra{a_i}\Tilde{q}^{\frac{1}{2}(L_0+\Bar{L}_0-\frac{c}{12})}\ket{a_j}=Z_{ij}(\tau)  \,,
\ee
with $\tilde{q}=e^{-\frac{2\pi i }{\tau}}$, where the coefficients of $Z_{ij}$ in the $q$ expansion are positive integers and where the vacuum state is unique. We will see that these functions $Z_{ij}$ will be the fundamental building blocks of the final result for the symmetric orbifold overlaps.

We can start building boundary states of the product theory by tensoring such boundary states together. For example, we have
\be
\ket{a_{i_1}} \otimes \cdots \otimes \ket{a_{i_N}} \,.
\ee
If some of the $a_i$ are different, this state is not invariant under the permutation of the copies, and must therefore be symmetrized. One of the building blocks we will use to form good boundary states of the orbifold theory are thus the symmetrized version of these product states. They are of the form
\be \label{untwpre}
\ket{b}_{\textrm{untw}}=\sum_{g\in S_N} \ket{a_{i_{g(1)}}} \otimes \cdots \otimes \ket{a_{i_{g(N)}}} \,.
\ee
Note that these states are not properly normalized. We will fix the normalization later on when we assemble the various building blocks. By construction, the states \rref{untwpre} obviously satisfy the Ishibashi condition
\be
(L_n-\bar{L}_{-n})\ket{b}_{\textrm{untw}}=0 \,,
\ee
where $L_n=\sum_i L^i_n$ is the full Virasoro mode of the orbifold theory. It is worthwhile to mention that they actually satisfy the much stronger constrain
\be \label{extraconditions}
(L_n^i-\bar{L}^i_{-n})\ket{b}_{\textrm{untw}}=0  \quad \forall i \,,
\ee
which they inherit from the Ishibashi condition of each copy of the seed theory. This stronger condition should be viewed as an Ishibashi condition for the extended chiral algebra $\textrm{Vir}^{\otimes N}/S_N$. Since for rational seed theories the orbifold theory is rational with respect to this extended chiral algebra, but irrational with respect to Virasoro, it makes sense to start by considering only the boundary states that respect this extended algebra, as it is such states that we might hope to be able to simply classify.

Before moving on to the twisted sector states, we would like to mention that there are other conditions on the extended algebra that one can impose. In fact, any automorphism of this algebra should provide a set of boundary conditions.  One family of examples was considered by Recknagel \cite{Recknagel:2002qq} in product (but not orbifold) theories. They studied the boundary conditions satisfying
\be
(L_n^i-\bar{L}^{g(i)}_{-n})\ket{b}_{\textrm{untwisted}}=0  \quad \forall i \,,
\ee
for any choice of permutation $g\in S_N$. We will not attempt to extend such boundary conditions to orbifold thoeries in this paper, although we will discuss them further in the discussion section.

We now turn to the twisted sector building blocks. Since a twisted sector is fixed by its conjugacy class in $S_N$, which is described in terms of the cycle decomposition of permutations, it suffices to discuss individual cycles which can be later assembled. We start by defining a twist-k sector Ishibashi state
\be \label{Ishitwisted}
\ket{h}\rangle_{\textrm{twist-k}}=\sum_{n=0}^{\infty} \sum_{j=1/k}^{d_h(n)} \ket{h_{\textrm{twisted}}(k),n,j,\bar{n},\bar{j}} \,,
\ee
where the sum runs over the fractional Virasoro generators defined in \rref{fractionalL} and $h_{\textrm{twisted}}(k)$ is given in \rref{htwisted}. Note that this twisted Ishibashi state is now already a superposition of several Ishibashi states of the orbifold theory, since the sum over fractional descendants generates new primary states. The prescription of the state \rref{Ishitwisted} is simply to build them by superposing these states with the same coefficients.

By construction, these states satisfy the condition
\be \label{conditionstwisted}
(L_n-\bar{L}_{-n})\ket{h}\rangle_{\textrm{twist-k}}=0 \qquad \forall n \in \mathbb{Z}/k \,,
\ee
For integer $n$, this is simply the usual Ishibashi condition, but because of the non-integer cases, we find a stronger boundary condition. One should view this as the equivalent of \rref{extraconditions} in the twisted sector. Since the twisted sector is built in an $S_N$ invariant way, it no longer makes sense to talk about the individual $L^i$. However, the twisted sector inherits the fractional modes which descend from the non-$S_N$ invariant combinations of $T$ (for example $T^1-T^2$ in a $\mathbb{Z}_2$ orbifold). In principle, both \rref{extraconditions} and \rref{conditionstwisted} should be combined and organized in terms of the extended chiral algebra $\textrm{Vir}^{\otimes N}/S_N$. Unfortunately, to the best of our knowledge, a detailed understanding of this complicated algebra is still lacking so we will simply think of the boundary conditions in terms of the generalized Ishibashi conditions rather than in terms of the higher spin currents themselves. 

A boundary state in the twisted sector can then be obtained by superposing these twisted sector Ishibashi states. If we have a seed theory state given by
\be
\ket{a}=\sum_h a_h \ket{h}\rangle \,,
\ee
we obtain a boundary state
\be
\ket{a}_{\mathrm{twist-k}} = \sum_h a_h \ket{h}\rangle_{\mathrm{twist-k}} \, .
\ee
$\ket{a}_{\mathrm{twist-k}}$ lives in a twisted sector where $k$ copies of the seed theory are glued together. Before explicitly symmetrizing, we also want to specify which copies are being glued together in a twisted cycle. So, we will often write $\ket{a_{\sigma}}$ for $\sigma$ some $k$-cycle in $S_N$. The cycle indicates which, and in what order, individual copies are twisted together. It is then possible to combine these twisted boundary states as
\be
\ket{(a_{1})_{\sigma_1}} \otimes \cdots \otimes \ket{(a_{m})_{\sigma_m}} \,.
\ee
where the $\sigma_i$ are all disjoint cycles in $S_N$ and $\sum|\sigma_i| =N$. These types of states have not yet been  symmetrized or normalized. 
Note that if we have a state of the form 
\begin{equation}
    \ket{(a)_{\sigma_1}} \otimes \cdots \otimes \ket{(a)_{\sigma_m}} \,.
\end{equation}
for a collection of disjoint cycles $\sigma_i$ where all of the boundary conditions are the same in each copy, then we can unambiguously write this state in the simpler form
\begin{equation}
    \ket{(a)_{g}} \quad, \quad \quad  g = \sigma_1 \cdot \sigma_2 \cdots 
    \sigma_m \, .
\end{equation}

We are now almost ready to write down our formula for the orbifold Cardy states. 
The first piece of information we will use to specify a Cardy state will be $\vec{n}$, a $n_b$-dimensional vector. This vector means that the Cardy state will be constructed from $n_i$ boundary states of type $a_i$ from the seed theory. Because there are a total of $N$ copies, we require
\be
\sum_{i=1}^{n_b}n_i=N \,.
\ee
As we will see, there are many different Cardy states for a given choice of $\vec{n}$ and we need to specify more information. 
To each component of the vector, $n_i$, we will associate a subgroup $S_{n_i}\subset S_N$. Our boundary state is then determined by a choice of irreducible representation for each of the groups $S_{n_i}$.  In other words, it is specified by a partition of each $n_i$ (or equivalently a choice of conjugacy class from each $S_{n_i}$). We will denote each of these representations by the label $r_i$ and assemble them in a vector $\vec{r}$.

We now propose the following consistent boundary states of the orbifold theory:
\begin{claim}\label{claim:main}
A complete set of boundary states for the symmetric orbifold that are consistent with the additional symmetry constraints \rref{extraconditions} and \rref{conditionstwisted} are given by
\begin{align}
   | \vec{n},\vec{r} \rangle = \frac{1}{\sqrt{N!}}\sum_{h \in S_N}\prod_{i=1}^{n_b} \left(  \sum_{g_i \in S_{n_i}(N_i) } \frac{\chi^{r_i}(g_i)}{n_i!} |(a_i)_{h g_i h^{-1}} \rangle \right) \,.
   \label{eq:general-state}
\end{align}
for the complete set of vectors $\vec{n}$ and irreducible representations $\vec{r}$. Here $\chi^{r_i}$ is the character of the representation $r_i$. 
\end{claim}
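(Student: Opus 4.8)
The states \rref{eq:general-state} satisfy the boundary-condition constraints \rref{extraconditions} and \rref{conditionstwisted} automatically, since they are assembled from the untwisted and twist-$k$ Ishibashi blocks that were already shown to obey them; the content of the claim is therefore Cardy consistency and completeness. The plan is to verify Cardy's condition \rref{cardy} directly: compute the closed-string overlap of two candidate states, modular $S$-transform it into the open channel, and show the resulting multiplicities are non-negative integers with the vacuum appearing only on the diagonal. I would begin in the closed channel. Because Ishibashi states built on distinct twisted sectors are orthogonal and the twist weight \rref{htwisted} depends only on cycle structure, the double sum over $h,h'\in S_N$ and over the block permutations $g_i,g_i'$ collapses onto terms where the two twist permutations coincide and the seed labels on corresponding copies agree. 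This forces $\vec n=\vec n'$ (the multiset of seed conditions must match) and leaves a single sum over a common permutation whose cycle decomposition dictates a product of seed overlaps $\prod_{\text{cycles } c} Z_{a_i a_i}$ of \rref{seedoverlaps}, each evaluated at a modular parameter rescaled by the cycle length — the long-string effect already visible in \rref{htwisted}. The overall structure mirrors the twisted torus partition function \rref{orb}, with the centralizer sum reappearing as the residual freedom in the common twist.

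Next I would do the representation-theoretic bookkeeping. The sums over $g_i$ and the conjugations by $h$ weighted by $\chi^{r_i}$ implement, up to the factor $1/\dim r_i$, the central projector onto the irrep $r_i$ inside the group algebra of each block $S_{n_i}$. Orthogonality of characters, $\frac{1}{n!}\sum_g \chi^r(g)\chi^{r'}(g)=\delta_{rr'}$, should then produce $\delta_{r_i r_i'}$ in the vacuum-extracting limit, establishing that the states are orthonormal and that the vacuum appears only when $(\vec n,\vec r)=(\vec n',\vec r')$, i.e. $N^0_{\vec n\vec r,\,\vec n'\vec r'}=\delta_{\vec n\vec n'}\delta_{\vec r\vec r'}$ as demanded below \rref{ann}. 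The prefactor $1/\sqrt{N!}$ is fixed precisely so this normalization comes out right.

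I would then carry out the $S$-transform cycle by cycle. Since each seed overlap $Z_{a_i a_i}$ is by assumption a genuine seed BCFT partition function with a non-negative integer open-string expansion, the only remaining task is to show that the orbifold-specific prefactors surviving the character sums are themselves non-negative integers. I expect these to organize into branching multiplicities: the open-channel coefficients should count how many times a given representation of the centralizer of the open-channel twist (a Young-subgroup datum) occurs when the $\chi^{r_i}$ are restricted, so that integrality follows from the non-negativity of branching coefficients. Recognizing the $S$-transformed combinatorial factor as such a Frobenius-type counting formula, rather than as a merely abstract positive quantity, is where I expect the main obstacle to lie; the delicate part is tracking the centralizer weights $1/|C_g|$ from \rref{orb} through the transform and matching them to the representation counts.

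Finally, for completeness I would count. The candidate states are labelled by a distribution $\vec n$ of the $N$ copies among the $n_b$ seed conditions together with a choice of $S_{n_i}$-irrep (equivalently a partition of each $n_i$) for every block. I would match this count against the number of extended-algebra Ishibashi states compatible with \rref{extraconditions} and \rref{conditionstwisted} — the diagonal primaries labelled by a conjugacy class of $S_N$ together with seed data on each cycle. Because character orthogonality makes the map from the Ishibashi basis to the constructed states an invertible, orthonormal change of basis, equality of the two counts shows that the states span the full space of allowed boundary conditions, giving completeness.
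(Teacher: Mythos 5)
Your skeleton tracks the paper's proof (boundary conditions hold by construction; vacuum multiplicity from Schur orthogonality; integrality traced to branching multiplicities; completeness by counting Cardy versus Ishibashi states), but there are two genuine gaps. The first is an error: orthogonality of the Ishibashi blocks does \emph{not} force $\vec n=\vec n'$. Distinct seed boundary states $\ket{a_i}$, $\ket{a_j}$ are superpositions over the \emph{same} set of seed Ishibashi states, so their overlap $Z_{ij}(\tau)$ in \rref{seedoverlaps} is generically nonzero --- only its vacuum coefficient is $\delta_{ij}$. (Compare $Z_{aa^1,\alpha\beta^1}=Z_{a\alpha}(\tau)Z_{a\beta}(\tau)\neq 0$ in the appendix.) Hence the overlaps between states with different $\vec n$ do not vanish and must themselves be shown to be good partition functions; this is the content of the paper's Claim \ref{claim:general}, where the overlap is organized as a sum over pairing matrices $\{c_{ij}\}$ with row and column sums matching $\vec n$ and $\vec n'$, and each $\chi^{r_i}$ is branched into irreps of $\prod_j S_{c_{ij}}$. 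Your argument as structured has no mechanism for these off-diagonal terms.

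The second gap is that the technical heart of the integrality proof --- showing that the multiplicity of a state with seed multiplicities $\{M_l\}$ is exactly $\frac{1}{|H|}\sum_{g\in H}\chi^r(g)\chi^{r'}(g)=\sum_{\vec s}n^{r}_{\vec s}n^{r'}_{\vec s}$ for the Young subgroup $H=\prod_l S_{M_l}$ --- is precisely the step you flag as ``where the main obstacle lies'' and then do not carry out. The paper does this by counting the number of inequivalent assignments of state labels to cycles, $N(g)=|\mathrm{Cl}(g)\cap H|\,|C_{S_N}(g)|/|H|$, which converts the centralizer weights $\prod_k k^{m_k}m_k!$ into the $1/|H|$ of a character inner product (proof of Claim \ref{claim:same-state-diff-rep-good}, leading to \rref{eq:homog-mult}); without that combinatorial identity the claimed Frobenius-type formula is an expectation, not a proof. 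Two smaller points: no explicit modular $S$-transform is needed beyond the seed Cardy property, since $Z_{ij}(k\tau)$ inherits an integer expansion from $Z_{ij}(\tau)$; and the relevant subgroup is the Young subgroup fixed by the multiset of seed states appearing in an open-channel state, not ``the centralizer of the open-channel twist'' --- the open-string spectrum of these states is entirely untwisted. Your completeness count and the vacuum argument are otherwise in line with the paper's.
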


This equation should be understood as follows. We first pick some boundary state $a_i$, of which there are $n_i$ copies. The Cardy conditions will require not just the product of untwisted states to appear, but a sum over all the possible ways of twisting these $n_i$ states together. To do so, we sum over permutations $g_i \in S_{n_i}(N_i)$ to generate the twisted states $\ket{(a_i)_{g_i}}$. By the notation $S_{n_i}(N_i)$ we mean permutations of $n_i$ elements starting at $N_i+1$ where $N_i=\sum_{j<i}n_j$. For example, for $S_3(5)$ we would generate states such as $\ket{a_{(678)}}$ in the twist-3 sector or $\ket{a_{(68)(7)}}$ in the twist-2 times untwisted sector. The second sum over $S_N$ simply makes the state invariant under permutations so that it is a well-defined orbifold state. The factor of $1/\sqrt{N!}$ is such that the state is properly normalized. 

The dynamical information (i.e. the correct way to superpose the Ishibashi states) is really encoded in the factors of $\frac{\chi^{r_i}(g)}{n_i!}$. $\chi^{r_i}(g)$ is the character of the representation $r_i$ evaluated on $g$. The factors of $n_i!$ can be seen as avoiding an overcounting, and is easiest to understand for the trivial representation $\chi^{r_i}(g)=1$. In that case, the state is built such that every single \textit{different} allowed combination of the copies into different twisted products appears exactly once, up to the overall factor of $1/\sqrt{N!}$. We give explicit examples in appendix \ref{sec:appendix-1} for $N\leq3$. We will now proceed to prove that the states $| \vec{n},\vec{r} \rangle$ satisfy Cardy's condition.

\subsection{Proving the Cardy Condition}

We are now ready to prove that the boundary states \rref{eq:general-state} are good Cardy states. We will proceed in three steps. We begin by considering a choice of $\vn$ where all seed states are the same. The first step will be to show that this homogeneous state, in the trivial representation, satisfies the Cardy conditions. The second step will be to verify that the homogeneous state with any choice of representation does as well. Finally the third step will be to check that any choice of $\vn,\vec{r}$ also works by using the previous cases as building blocks.

Recalling Section \ref{sec:BCFTrev}, to prove the Cardy conditions we need to show the overlap of any two states produces an open string partition function with integer multiplicites for states and a unique vacuum state iff the two states are identical. 
We will compute the overlap partition and verify these statements. 
All partition functions in this section will be open-string partition functions, so we will drop the notation indicating what channel we are working in. 

Let us begin by computing the open string partition function when all of the seed states are the same and the representations are trivial:
\begin{claim}\label{claim:trivial}
  Consider $\vec{n}=N \hat{i}$ and $\vec{n'}=N \hat{j}$, and the trivial representation for $r_i$ such that the character is identically equal to 1. The overlap is given by
\begin{align} \label{overlapallsame}
    \mathcal{Z}^{(N)}_{ij}(\tau)\equiv \braket{N \vec{i},\textrm{trivial}|\Tilde{q}^{\frac{1}{2}(L_0+\Bar{L}_0-\frac{c}{12})}|N \vec{j},\textrm{trivial}}=\sum_{\lbrace m_k \rbrace \in p(N)} \prod_{k=1}^N \frac{1}{(k^{m_k}m_k!)} Z_{ij}^{m_k}\left(k{\tau}\right) \,.
\end{align}
\end{claim}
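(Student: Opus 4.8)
The plan is to reduce the closed-string overlap to a sum over $S_N$ that can be reorganized by cycle type, at which point the claimed formula drops out of the standard count of permutations in each conjugacy class. Two structural inputs do the work: (i) twisted Ishibashi states $\ket{(a)_g}$ and $\ket{(a)_{g'}}$ attached to distinct permutations $g\neq g'$ lie in orthogonal twist sectors, and since the evolution operator $\tilde q^{\frac12(L_0+\bar L_0-c/12)}$ is block-diagonal with respect to these sectors, only equal permutation labels contribute; and (ii) within a fixed twist sector the overlap factorizes over the disjoint cycles of $g$, a single $k$-cycle contributing exactly $Z_{ij}(k\tau)$ with $Z_{ij}$ as in \rref{seedoverlaps}.

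First I would simplify the state. With all $N$ seed states equal to $a_i$ and the trivial representation $\chi^{r}\equiv 1$, the product in \rref{eq:general-state} has a single factor with $S_{n_i}(N_i)=S_N$ and $n_i!=N!$, so
\begin{equation}
\ket{N\hat i,\mathrm{trivial}} = \frac{1}{\sqrt{N!}\,N!}\sum_{h\in S_N}\sum_{g\in S_N}\ket{(a_i)_{hgh^{-1}}} \,.
\end{equation}
Inserting this into the overlap and applying input (i) forces the two permutation labels to coincide, giving
\begin{equation}
\mathcal{Z}^{(N)}_{ij} = \frac{1}{(N!)^3}\sum_{\substack{h_1,g_1\\ h_2,g_2}} \delta_{\,h_1 g_1 h_1^{-1},\,h_2 g_2 h_2^{-1}}\ \prod_{k} Z_{ij}(k\tau)^{m_k(g_1)} \,,
\end{equation}
where $m_k(g)$ is the number of $k$-cycles of $g$ and the product is precisely the content of input (ii), depending only on the cycle type of the common conjugate.

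Next I would carry out the combinatorics. For each target permutation $\sigma$ the constraint $h_a g_a h_a^{-1}=\sigma$ is solved by exactly $N!$ pairs $(h_a,g_a)$ for each of $a=1,2$ (choose $h_a$ freely, then $g_a=h_a^{-1}\sigma h_a$). Hence
\begin{equation}
\mathcal{Z}^{(N)}_{ij} = \frac{1}{(N!)^3}\sum_{\sigma\in S_N}(N!)^2\prod_k Z_{ij}(k\tau)^{m_k(\sigma)} = \frac{1}{N!}\sum_{\sigma\in S_N}\prod_k Z_{ij}(k\tau)^{m_k(\sigma)} \,.
\end{equation}
Grouping the remaining sum by cycle type and using that the conjugacy class of type $\{m_k\}$ has $N!/\prod_k k^{m_k} m_k!$ elements cancels the $1/N!$ and yields exactly \rref{overlapallsame}.

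I expect the main obstacle to be establishing input (ii) rigorously---that a single $k$-cycle twist sector contributes $Z_{ij}(k\tau)$ rather than $Z_{ij}(\tau)$ or $Z_{ij}(\tau/k)$. This is where the orbifold dynamics enters, through the fractional modes \rref{fractionalL} and the shifted twisted weights \rref{htwisted}: gluing $k$ copies along a $k$-cycle produces the seed theory on a $k$-fold cover, rescaling the modular parameter $\tau\to k\tau$. I would verify it by checking that the twisted Ishibashi state \rref{Ishitwisted}, built from the fractional descendants with weight $h_{\mathrm{twisted}}(k)$, reproduces the seed overlap $Z_{ij}$ with $\tau\to k\tau$ after $\tilde q$-evolution. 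The orthogonality in (i) is comparatively routine, following from the twist sectors being superselection sectors of the orbifold Hilbert space.
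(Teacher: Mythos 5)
Your proposal is correct and follows essentially the same route as the paper: reduce to a single sum over $S_N$ by counting conjugation redundancy, factorize the overlap over cycles with each $k$-cycle contributing $Z_{ij}(k\tau)$, and group by conjugacy class using $|C_{S_N}(g)|=\prod_k k^{m_k}m_k!$. The only (immaterial) difference is that you impose $g'=g$ where the paper's convention gives $g'=g^{-1}$; since inversion is a cycle-type-preserving bijection of $S_N$, the combinatorics and the final formula are unchanged.
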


\begin{proof}
By direct use of their definition, the overlap of our proposed boundary states is given by 
\begin{align}
    \frac{1}{{N!}^3} \sum_{h,h^\prime,g,g^\prime \in S_N}  \langle (a_i)_{h^\prime g^\prime h^{\prime \, -1}} |(a_j)_{h g h^{-1}} \rangle \,.
\end{align}
We can eliminate the sums over $h,h^\prime$ since the other sums cover the entire group $S_N$ (conjugation is an automorphism of the group). This gets rid of two factors of $N!$ and we are left with 
\begin{align}
   \frac{1}{N!} \sum_{g,g^\prime \in S_N} \langle (a_i)_{g^\prime} |(a_j)_{g } \rangle  \,.
\end{align}
which becomes
\begin{align}
   \frac{1}{N!} \sum_{g\in S_N} \langle (a_i)_{g^{-1}} |(a_j)_{g } \rangle  \,.
\end{align}
because the overlaps vanish unless the permutations are the inverse  of each other. The overlaps of the two states are determined by the sizes of the cycles in the unique decomposition of the permutation into disjoint cycles. 
We can indicate the sizes of cycles appearing in this decomposition in the form
\begin{eqnarray}
    g=(1)^{m_1}(2)^{m_2}....(N)^{m_N};\quad \sum_k k m_k=N \, .
\end{eqnarray}
The set of $m_k$ label a partition of $N$. The overlap is given by
\begin{equation}
    \prod_{k=1}^N Z_{ij}^{m_k}\left(k\tau\right) \,.
\end{equation}
The number of permutations in each conjugacy class specified by the partition
is 
\begin{equation}
\frac{N!}{|C_{S_N}(g)|} \,,
\end{equation}
where the order of the centralizer $C_{S_N}(g)$ is
\begin{equation}
    |C_{S_N}(g)| = \prod_{k=1}^N{k^{m_k}m_k!} \,.
\end{equation}
In total our formula thus becomes
\begin{align} \label{endproofallsame}
    \mathcal{Z}^{(N)}_{ij}(\tau)=\sum_{\lbrace m_k \rbrace \in p(N)} \prod_{k=1}^N \frac{1}{k^{m_k}m_k!} Z_{ij}^{m_k}\left(k{\tau}\right) \,,
\end{align}
as advertised.
\end{proof}

\begin{claim}
$\mathcal{Z}^{(N)}_{ij}(\tau)$ is a consistent partition function. (It has integer multiplicities and the vacuum appears once iff $i=j$.)
\end{claim}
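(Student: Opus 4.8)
The plan is to package the whole family $\{\mathcal{Z}^{(N)}_{ij}\}_N$ into a single generating function in an auxiliary variable $t$ and to recognize the right-hand side of \rref{overlapallsame} as a symmetric-product (plethystic) exponential. Since a partition $\{m_k\}$ of $N$ is the same data as an unconstrained tuple of non-negative integers $(m_1,m_2,\dots)$ subject to $N=\sum_k k m_k$, summing \rref{overlapallsame} against $t^N$ factorizes over $k$:
\begin{equation}
\sum_{N\ge0} \mathcal{Z}^{(N)}_{ij}(\tau)\, t^N \;=\; \prod_{k\ge1}\;\sum_{m_k\ge0}\frac{\big(t^k Z_{ij}(k\tau)\big)^{m_k}}{k^{m_k}\,m_k!} \;=\; \exp\!\left(\sum_{k\ge1}\frac{t^k}{k}\, Z_{ij}(k\tau)\right).
\end{equation}
Thus $\mathcal{Z}^{(N)}_{ij}$ is exactly the coefficient of $t^N$ in this plethystic exponential, and the fractional prefactors $1/(k^{m_k}m_k!)$ are revealed as the universal weights that exponentiate.

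Next I would feed in the seed Cardy data. By \rref{seedoverlaps} the seed overlap has a $q$-expansion $Z_{ij}(\tau)=\sum_{w} a^{(ij)}_w\, q^{w}$ with $a^{(ij)}_w\in\mathbb{Z}_{\ge0}$ (with $q=e^{2\pi i\tau}$ and $w=h-c_{\mathcal{C}}/24$ running over the open-string weights of $\mathcal{H}_{ij}$), so that $Z_{ij}(k\tau)=\sum_w a^{(ij)}_w q^{kw}$. Using $\sum_{k\ge1}\tfrac1k x^k=-\log(1-x)$ term by term, the exponent collapses and the generating function becomes a product,
\begin{equation}
\sum_{N\ge0}\mathcal{Z}^{(N)}_{ij}(\tau)\,t^N=\prod_{w}\big(1-t\,q^{w}\big)^{-a^{(ij)}_w}.
\end{equation}
Each factor $(1-tq^{w})^{-a^{(ij)}_w}=\sum_{j\ge0}\binom{a^{(ij)}_w+j-1}{j}(tq^{w})^{j}$ has non-negative integer coefficients, and the coefficient of $t^N$ is a finite product of such series; hence $\mathcal{Z}^{(N)}_{ij}(\tau)$ has non-negative integer coefficients in its $q$-expansion. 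This is the integrality half of Cardy's condition.

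For the vacuum I would read off the lowest power of $q$ at order $t^N$ directly from this product. The orbifold vacuum sits at weight $-Nc_{\mathcal{C}}/24$, i.e.\ at $q^{Nw_0}$ with $w_0=-c_{\mathcal{C}}/24$ the smallest seed weight. If $i=j$, uniqueness of the seed vacuum gives $a^{(ii)}_{w_0}=1$, and the only way to reach $q$-power $Nw_0$ at $t$-degree $N$ is to take $t^N$ from $(1-tq^{w_0})^{-1}=\sum_{j}t^j q^{w_0 j}$ and $t^0$ from every heavier factor, giving coefficient $\binom{N}{N}=1$; equivalently this reproduces the elementary identity $\sum_{\{m_k\}\vdash N}\big(\prod_k k^{m_k}m_k!\big)^{-1}=\tfrac1{N!}\sum_{g\in S_N}1=1$. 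If instead $i\ne j$ there is no vacuum in $\mathcal{H}_{ij}$, so $w_0>-c_{\mathcal{C}}/24$ and the minimal attainable power at order $t^N$ is $q^{Nw_0}$, strictly above $q^{-Nc_{\mathcal{C}}/24}$; the orbifold vacuum then has coefficient $0$. This establishes $N^{0}=\delta_{ij}$.

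The only real subtlety — the step I would treat most carefully — is the bookkeeping that turns these manipulations into genuine statements about the open-string spectrum rather than formal series juggling: I must use that $Z_{ij}(\tau)$ is an honest open-string partition function with non-negative integer multiplicities (as asserted below \rref{seedoverlaps}), check that the rational $q$-exponents $w$ cause no trouble in the plethystic rewriting, and verify that the lowest-weight term of the product is unambiguously the orbifold identity, since every twisted-sector ground state has weight $\tfrac{c_{\mathcal{C}}}{24}(k-1/k)\ge 0$ and so cannot drop below it. None of these is hard, but they are precisely what elevate the generating-function identity to a proof of Cardy's condition.
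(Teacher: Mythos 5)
Your proof is correct, and its core identification coincides with the paper's: the right-hand side of \rref{overlapallsame} is the character of the $N$-th symmetric product of the seed open-string Hilbert space $\mathcal{H}_{ij}$. The paper asserts this by recognizing the expression as the untwisted-sector orbifold partition function in the open channel (citing Polya enumeration) and defers the explicit integrality argument to the general-representation case, Claim \ref{claim:same-state-diff-rep-good}, where the multiplicity comes out as the branching number $\sum_{\vec{s}} n^{r}_{\vec{s}} n^{r'}_{\vec{s}}$ specialized to trivial $r,r'$. You instead make the symmetric-product identification self-contained: resumming the generating function into the plethystic exponential $\prod_w (1-t q^w)^{-a^{(ij)}_w}$ exhibits the $t^N$ coefficient as a sum over size-$N$ multisets of seed open-string states, hence manifestly with non-negative integer $q$-coefficients, and your vacuum count (the minimal power $q^{Nw_0}$ at order $t^N$ is reached only by taking all $N$ powers of $t$ from the seed-vacuum factor, which is present with multiplicity one iff $i=j$) is a more explicit version of the paper's one-line argument. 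What your route buys is an elementary, self-contained proof that does not lean on the later, more involved claim; what it gives up is that the generating-function trick does not extend to non-trivial representations, which is why the paper organizes the logic the other way around. One minor point: your caveat about twisted-sector ground states is vacuous here — the open-string spectrum of this overlap is purely the symmetric product of untwisted seed states (as the paper itself emphasizes in the discussion), so the only input needed for the vacuum count is that the seed vacuum is the unique minimal-weight state of $Z_{ii}$ and is absent from $Z_{ij}$ for $i\neq j$.
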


\begin{proof}
While it may not be obvious that the coefficients in \rref{overlapallsame} are integers, the final expression can be recognized as the standard untwisted sector partition function in the open string channel for an orbifold CFT on a torus (i.e. the projection to the $S_N$ singlet states in the open string channel). The coefficients are thus integers. One way to see this is as a consequence of Polya's enumeration theorem \cite{Belin:2014fna}.

The vacuum only appears in the seed partition function ${Z}_{ij}$ when $i=j$. 
So it follows we get exactly one copy of the vacuum when the seed states are the same for the same reason we do when considering the standard orbifold on a torus. 

For the reader interested in a more explicit proof, it also will follow as a special case of our next proof for general representations. 
\end{proof}

The second step is to consider non-trivial representations for $r_i$ and $r_j$. We thus have to compute
\begin{align} 
    \mathcal{Z}_{ij}^{(N)r, r'}(\tau)\equiv \braket{N \vec{i},r|\Tilde{q}^{\frac{1}{2}(L_0+\Bar{L}_0-\frac{c}{12})}|N \vec{j},r'} \,.\label{nontrivial-overlap}
\end{align}
Using \rref{eq:general-state}, and following the proof of Claim \ref{claim:trivial}, it is easy to see that this overlap is given by
\begin{align}  \label{overlapallsamerrp}
    \mathcal{Z}_{ij}^{(N)r, r'}(\tau)= \frac{1}{N!} \sum_{g} \chi^r(g) \chi^{r'}(g) Z_{ij}^{[g]}\left({\tau}\right) \,,
\end{align}
where
\begin{equation}
     Z_{ij}^{[g]}\left({\tau}\right)= \prod_{k=1}^N Z_{ij}^{m_k}\left(k{\tau}\right) \, .
\end{equation}
Here the $m_k$ count the number of $k$ cycles in $g$ as before. The only change from the previous proof is the insertion of the characters. 

When the character was in the trivial representation, this was simply the untwisted sector partition function which manifestly gave integer coefficients. While it is perhaps less obvious that this quantity contains only integer coefficients, we will show that it is indeed the case, confirming that these states are good boundary states.

\begin{claim}\label{claim:same-state-diff-rep-good}
The partition function for general representations $r,r^\prime$  which we denote $\mathcal{Z}_{ij}^{(N)r, r'}(\tau)$ and which is given by equation \eqref{overlapallsamerrp}, is a good partition function.
\end{claim}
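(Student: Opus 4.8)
The plan is to recognize the character-weighted sum in \eqref{overlapallsamerrp} as a Schur function evaluated on the seed spectrum, after which both positivity and the vacuum count become transparent. First I would strip off the universal ground-state prefactor: since $\sum_k k\, m_k = N$ for every $g\in S_N$, each $Z_{ij}^{[g]}(\tau)$ carries the same factor $q^{-Nc/24}$, which pulls out of the $g$-sum. Writing $\tilde Z_{ij}(\tau)=q^{c/24}Z_{ij}(\tau)=\sum_\alpha x_\alpha$ with one formal variable $x_\alpha=q^{n_\alpha}$ per open-string seed state (so that $\tilde Z_{ij}(k\tau)=\sum_\alpha x_\alpha^k=p_k$, the $k$-th power sum), the reduced building block becomes $\tilde Z_{ij}^{[g]}=\prod_k p_k^{m_k(g)}=p_{\lambda(g)}$, the power-sum symmetric function of the cycle type $\lambda(g)$. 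By the hypotheses on the seed overlaps these have non-negative integer $q$-coefficients, with $\tilde Z_{ij}$ beginning at $q^0$ (the unique vacuum) precisely when $i=j$, and at a strictly positive power when $i\neq j$.

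Next I would invoke the Frobenius characteristic formula $s_s=\sum_{\lambda\vdash N} z_\lambda^{-1}\chi^s(\lambda)\,p_\lambda$, with $z_\lambda=\prod_k k^{m_k}m_k!$ the centralizer order. Since the class of cycle type $\lambda$ has $N!/z_\lambda$ elements, for any irreducible $\chi^s$ this yields
\begin{equation}
\frac{1}{N!}\sum_{g\in S_N}\chi^s(g)\,\tilde Z_{ij}^{[g]}(\tau)= s_s\big|_{p_k\to \tilde Z_{ij}(k\tau)},
\end{equation}
i.e. the Schur function $s_s$ evaluated on the variables $x_\alpha$. The key structural input is then the semistandard tableau expansion $s_s(x)=\sum_T x^T$: the coefficient of $q^E$ counts semistandard Young tableaux of shape $s$ with entries labelled by seed states and total weight $E$, a non-negative integer (finite even for an irrational seed, since only finitely many $x_\alpha$ have $n_\alpha\le E$). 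To pass to general $r,r'$ I would decompose the product of characters, $\chi^r(g)\chi^{r'}(g)=\sum_s \kappa^{s}_{rr'}\chi^s(g)$, where the Kronecker coefficients $\kappa^{s}_{rr'}$ are non-negative integers, so that
\begin{equation}
\mathcal{Z}_{ij}^{(N)r,r'}(\tau)=q^{-Nc/24}\sum_{s\vdash N}\kappa^{s}_{rr'}\; s_s\big|_{p_k\to \tilde Z_{ij}(k\tau)},
\end{equation}
a non-negative-integer combination of series with non-negative integer coefficients, establishing integrality.

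Finally, for the vacuum I would isolate the $q^0$ term of the reduced series. When $i\neq j$ every $x_\alpha$ has $n_\alpha>0$, so no degree-$N$ monomial reaches $q^0$ and the vacuum is absent. When $i=j$, the vacuum variable $x_0=q^0$ can fill a tableau only if no column has length exceeding one, i.e. only for the single-row shape $s=(N)$, the trivial representation, with coefficient exactly $1$; every other shape forces a repeated entry in some column and contributes $0$. Hence the vacuum multiplicity equals $\sum_s \kappa^{s}_{rr'}\,\delta_{s,(N)}=\kappa^{(N)}_{rr'}=\langle\chi^r,\chi^{r'}\rangle=\delta_{rr'}$ when $i=j$, and $0$ otherwise, so the vacuum appears exactly once iff $i=j$ and $r=r'$, i.e. iff the two boundary states coincide. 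I expect the main obstacle to be spotting this symmetric-function dictionary in the first place; once $s_s$ appears, Schur positivity delivers integrality and the single-row tableau count pins down the vacuum, while specializing to $s=(N)$ recovers the trivial-representation statement of Claim \ref{claim:trivial}.
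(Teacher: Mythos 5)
Your proof is correct, but it takes a genuinely different route from the paper's. The paper argues state by state: it fixes a multiset of seed states with multiplicities $\{M_l\}$, counts by hand the number of inequivalent assignments of the cycles of each $g$ to those states, and arrives at the multiplicity $\rho(\{M_l\})=\frac{1}{|H|}\sum_{g\in H}\chi^r(g)\chi^{r'}(g)=\sum_{\vec s}n^{r}_{\vec s}n^{r'}_{\vec s}$, a sum of products of branching multiplicities into the Young subgroup $H=S_{M_1}\times\cdots\times S_{M_l}$, which is manifestly a non-negative integer; the vacuum count is the case $H=S_N$. You instead package the whole $q$-series at once: after stripping the common $q^{-Nc/24}$ prefactor, the Frobenius characteristic identity turns $\frac{1}{N!}\sum_g\chi^s(g)\prod_k p_k^{m_k(g)}$ into the Schur function $s_s$ evaluated on one variable per seed open-string state, the Kronecker decomposition $\chi^r\chi^{r'}=\sum_s\kappa^s_{rr'}\chi^s$ handles general $r,r'$, Schur positivity (the semistandard-tableau expansion) gives integrality, and the single-row tableau pins the vacuum multiplicity to $\kappa^{(N)}_{rr'}=\delta_{rr'}$ when $i=j$. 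The two computations agree, as they must: the coefficient your formula assigns to content $\mu=\{M_l\}$ is $\sum_s\kappa^s_{rr'}K_{s\mu}$, which equals the paper's $\langle\chi^r|_H,\chi^{r'}|_H\rangle_H$ by Frobenius reciprocity. What each buys: the paper's argument is elementary and self-contained, using only character orthogonality; yours is shorter once the symmetric-function dictionary is in place, yields the clean closed form $\mathcal{Z}^{(N)r,r'}_{ij}=q^{-Nc/24}\sum_{s\vdash N}\kappa^s_{rr'}\,s_s\bigl[\tilde Z_{ij}\bigr]$ as a plethystic rewriting of \eqref{overlapallsamerrp}, recovers Claim \ref{claim:trivial} as the $s=(N)$ specialization of \eqref{overlapallsame}, and makes the branching-rule interpretation of the multiplicities (which the paper only remarks on after its proof) automatic. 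Two small points of hygiene: the tableau expansion needs a total order on the (possibly infinite) set of seed states, which is harmless since Schur functions are symmetric, and the finiteness of each $q$-coefficient relies on the discreteness of the seed spectrum, which you correctly flag.
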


\begin{proof}
Let us count the multiplicity of all possible primary states in \rref{overlapallsamerrp}. States will be labelled by picking $l$ distinct states of the seed theory with multiplicity $M_l$ such that
\be
\sum_{l} M_l = N \,.
\ee
Such a state picks up a contribution from a term in the sum in \rref{overlapallsamerrp} if and only if
\be
g \in H=S_{M_1} \times ... \times S_{M_l} \,
\ee
so that we can assign each cycle in $g$ to contribute to one of the $M_l$.
Of course, there are many different ways to embed such a subgroup into $S_N$, and we want to sum over elements $g$ for any embedding. To take this into account, let us proceed in the following way. We will first start by summing over conjugacy classes, as was done in \rref{endproofallsame}. We thus have
\begin{align} 
    \mathcal{Z}_{ij}^{(N)r, r'}(\tau)=  \sum_{\lbrace m_k \rbrace \in p(N)} \chi^r(\{m_k\}) \chi^{r'}(\{m_k\}) \prod_{k=1}^N \frac{1}{k^{m_k}m_k!} Z(k\tau)^{m_k(g)} \,.
\end{align}
This is possible since the characters only depend on the conjugacy class. Now, we would like to zoom in on the particular contribution to a state specified by multiplicities $\{M_l\}$. The only non-zero terms will come from conjugacy classes that can be embedded in the group $H$. For reasons that will become clear shortly, instead of summing over conjugacy classes, we would like to sum over a particular instance of the group $H$ (say the one where the various $S_{M_l}$ are ordered such that the first $S_{M_1}$ contains the first $M_l$ elements and so on). We can do this by dividing by factors that take into account the number of elements in $H$ in a particular conjugacy class of $S_N$. This factor is given by $|\mathrm{Cl}(g)\cap H|$ for $g\in H$ where $\mathrm{Cl}(g)$ is the set of all elements in $S_N$ that are conjugate to $g$. To count the degeneracy of a state of multiplicity given by $\{M_l\}$, we thus have
\be
\rho(\{M_l\})= \sum_{g\in H}\frac{\chi^r(g) \chi^{r'}(g)}{|\mathrm{Cl}(g)\cap H|} \prod_{k=1}^N \frac{1}{k^{m_k}m_k!} N(g) \,,
\ee
where $N(g)$ is the number of times a state with multiplicity $\{M_l\}$ appears in $\prod Z(k\tau)^{m_k(g)}$. 

To find the combinatorical factor $N(g)$, we can think of the product of partition functions, $\prod_{k=1}^N Z(k\tau)^{m_k(g)}$ in terms of a particular representative element of the conjugacy class in $H$. 
For example, given the product $Z(2\tau)Z(4\tau)Z(\tau)^3$ as a class in $S_2 \times S_4\times S_3$ this could take the form
\begin{equation}
\label{eq:repcycle}
    \begin{tabular}{ccccc}
     $S_2$   & $\times$ & $S_4$ & $\times$& $S_3$ \\
     $(1)(2)$& &$(3456)$ & &$(78)(9)$\\   
    \end{tabular}
\end{equation}
We can similarly view the corresponding multiplicities of states which gave us the subgroup $H$ as a vector of the form. 
\begin{equation}
    (i_1 \, i_2 \quad j_3 \,j_4 \,j_5 \,j_6 \quad k_7 \, k_8 \, k_9 \,)
    \label{eq:ijks}
\end{equation}
The number of ways of getting this state from the partition function is then just the number of inequivalent assignments of the $i,j,k$ in \eqref{eq:ijks} to the cycle in \eqref{eq:repcycle}. For example, we get the obvious contribution by assigning
\begin{equation}
\label{eq:repcycle-assignment}
    \begin{tabular}{ccccc}
     $S_2$   & $\times$ & $S_4$ & $\times$& $S_3$ \\
     $(1)(2)$& &$(3456)$ & &$(78)(9)$\\ 
     $(i_1)(i_2)$ & & $(j_3\,j_4\,j_5\,j_6)$ & & $(k_7\,k_8)(k_9)$\\
    \end{tabular}
\end{equation}
but also
\begin{equation}
\label{eq:repcycle-assignment-2}
    \begin{tabular}{ccccc}
     $S_2$   & $\times$ & $S_4$ & $\times$& $S_3$ \\
     $(1)(2)$& &$(3456)$ & &$(78)(9)$\\ 
     $(k_8)(k_9)$ & & $(j_3\,j_4\,j_5\,j_6)$ & & $(i_1\,i_2)(k_7)$\\
    \end{tabular}
\end{equation}
where we have switched whether we get the two states $i$ from $Z(2\tau)$ or $Z(\tau)^2$. 
Note that every cycle must be assigned a single state label (here $i$, $j$, or $k$) since a $Z(k\tau)$ can only contribute $k$ times a single state. 
Because such an assignment is a mapping between the indices \eqref{eq:ijks} and \eqref{eq:repcycle}, we can see that it is itself a permutation. 
In order that a cycle not split indices, the permutation must map the representative cycle back into the subgroup $H$. 
So we get a potentially new assignment for every $g \in S_n$ such that $ghg^{-1} \in H$. 
In our second example assignment \eqref{eq:repcycle-assignment-2}, we can write the resulting permutation as 
\begin{equation}
\label{eq:repcycle-assignment-3}
    \begin{tabular}{cccccc}
     &$S_2$   & $\times$ & $S_4$ & $\times$& $S_3$ \\
     &$(1)(2)$& &$(3456)$ & &$(78)(9)$\\ 
     &$(k_8)(k_9)$ & & $(j_3\,j_4\,j_5\,j_6)$ & & $(i_1\,i_2)(k_7)$\\
     & & & & &\\
     $\Leftrightarrow\;$ &  $(1 \, 2)$ & & $(3\,4\,5\,6)$ & & $(7)(8)(9)$  \\  &$(i_1\,i_2)$ & & $(j_3\,j_4\,j_5\,j_6)$ & & $(k_7)(k_8)(k_9)$\\
    \end{tabular}
\end{equation}
where in the last two lines we have written the conjugated permutation and so brought all the $i,j,k$ back to their respective subgroups. 
On the other hand if we conjugate by a permutation in $H$ itself, it is just a rearrangement of the labelling of identical states, without assigning them to a new cycle (i.e. a new term in the partition function). 
For example, we get no new assignment from conjugation by $(12) \in S_2 \times S_4\times S_3$ which sends our first assignment in (\ref{eq:repcycle-assignment}) to 
\begin{equation}
    \begin{tabular}{ccccc}
     $S_2$   & $\times$ & $S_4$ & $\times$& $S_3$ \\
     $(1)(2)$& &$(3456)$ & &$(78)(9)$\\ 
     $(i_2)(i_1)$ & & $(j_3\,j_4\,j_5\,j_6)$ & & $(k_7\,k_8)(k_9)$\\
    \end{tabular}
\end{equation}

Note that, for every $g\in S_N$ such that $ghg^{-1} \in H$, we have $|H|$ redundant relabelings where instead we conjugate with  $h^\prime g$ for $h^\prime \in H$. 
We conclude that the number of unique assignments is in fact
\be
N(g)=\frac{|\mathrm{Cl}(g)\cap H|C_{S_N}(g)}{|H|}=\frac{|\mathrm{Cl}(g)\cap H|\prod_{k=1}^N k^{m_k}m_k!}{|H|} \,.
\ee
We thus find that in total we have
\be
\rho(\{M_l\})= \frac{1}{|H|}\sum_{g\in H}\chi^r(g) \chi^{r'}(g) \,.
\ee
To see that this is an integer, we use the decomposition of a character in terms of irreducible representations of a subgroup
\be
\chi^r(g)=\sum_{\vec{s} \ \textrm{irrep of}\  H} n_{\vec{s}}^r\chi^{\vec{s}}(g) \,,
\ee
where we have indicated the representation of the product group $H$ by a vector $\vec{s}$ of representations of each subgroup. We thus find
\bea
\rho(\{M_l\})&=& \frac{1}{|H|}\sum_{g\in H}\sum_{\vec{s},\vec{s}'}n_{\vec{s}}^r n_{\vec{s}'}^{r'}\chi^r(g) \chi^{r'}(g) \notag \\
&=&\sum_{\vec{s},\vec{s}'}n_{\vec{s}}^r n_{\vec{s}'}^{r'} \delta_{\vec{s},\vec{s}'} \notag \\
&=&\sum_s n_{\vec{s}}^r n_{\vec{s}}^{r'} \,,
\label{eq:homog-mult}
\eea
where we used the orthogonality of the irreducible characters of the subgroup $H$. Because the irreps of $H$ appear an integer number of times, this number is manifestly an integer.

Our last step is to show that the vacuum appears with multiplicity one iff the representations are the same. 
We immediately see that, for the vacuum to appear in the sub-partition functions, we must have $i=j$. 
The overall vacuum comes when we have the vacuum in every copy so that all the contributing states are the same. In other words, the multiplicity is counted by taking $H=G$. 
We then have the multiplicity of the vacuum is given by 
\begin{equation}
    \delta_{ij} \left( \frac{1}{|G|}\sum_{g\in G}\chi^r(g) \chi^{r'}(g) \right) =  \delta_{ij} \delta_{r,r'} \, ,
\end{equation}
which just follows from the orthogonality of the characters for irreps of $S_N$ (the Schur orthogonality relations). 
\end{proof}

There is a particularly nice interpretation for the integer multiplicites of states. They are given by the number of ways to split the representations $r,r'$ into irreps of subgroup specified by the particular state multiplicity $\{M_l\}$ we have considered. For $r=r'=\textrm{trivial}$, the number is always $1$, which makes complete sense, each multiplicity $\{M_l\}$ is counted exactly one time. That is simply the untwisted sector partition function. But for non-trivial representations $r,r'$ this number may be different.

 We are only left now to assemble these building blocks when we pick several different boundary states of the seed theory:

\begin{claim}\label{claim:general}
The overlap of two general boundary states is given by
\begin{equation} \label{ansatzbdrystate}
    Z_{\vec{n}\vec{n}^\prime}^{\vec{r}\vec{r}^{\;\prime}}(\tau) =\sum_{\lbrace c_{ij} \rbrace} \left(\sum_{s,s^\prime \ \textrm{irrep of}\  H_{c_{ij}}} n_{\vec{s}}^{\vec{r}} n_{\vec{s}^\prime}^{\vec{r}^\prime}\prod_{i,j} \mathcal{Z}^{(c_{ij}){s_i,s_j^\prime}}_{ij}(\tau) \right)\,,
\end{equation}
where $\sum_i c_{ij} = n_{j}$ and $\sum_j c_{ij}= n_i$, $H_{c_{ij}} = \prod_{i,j}S_{c_{ij}}(C_{ij})$ and $C_{ij}=\sum_{k=1}^{N}\sum_{l=1}^{j-1}c_{kl}+\sum_{k=1}^{i-1}c_{kj}$, and where $n_{\vec{s}}^{\vec{r}}$ are branching ratios for the product of irreps $\vec{r}$ into the product of irreps $\vec{s}$ of $H_{c_{ij}}$. 
\end{claim}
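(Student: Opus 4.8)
The plan is to reduce the general overlap to the homogeneous building blocks $\mathcal{Z}^{(N)r,r'}_{ij}$ that were already shown to be good partition functions in Claim \ref{claim:same-state-diff-rep-good}, by organizing the computation according to how the $N$ copies are distributed among seed-state types on the two sides. First I would expand $Z_{\vec{n}\vec{n}'}^{\vec{r}\vec{r}'}$ directly from the definition \rref{eq:general-state}. Exactly as in the proof of Claim \ref{claim:trivial}, the two conjugating sums over $h,h'$ collapse using the automorphism property of conjugation, and the surviving overlap of twisted building blocks is nonzero only when the permutations match as inverse cycle structures. The crucial structural input is that each block $|(a_i)_\sigma\rangle$ carries a single seed label on all copies of the cycle $\sigma$, so a matched $k$-cycle whose positions carry bra-label $a_i$ and ket-label $a_j$ contributes precisely $Z_{ij}(k\tau)$. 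Every matched permutation therefore decomposes into cycles, each of which is tagged by a pair $(i,j)$ of seed types.

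Next I would introduce the contingency table $c_{ij}$ counting the number of copies that are type $a_i$ on the bra and type $a_j$ on the ket; by construction these satisfy the marginal constraints $\sum_j c_{ij}=n_i$ and $\sum_i c_{ij}=n'_j$ stated in the claim. Grouping the sum over matched permutations by the value of $\{c_{ij}\}$ is nothing but the double-coset decomposition of $\prod_i S_{n_i}\backslash S_N/\prod_j S_{n'_j}$, whose representatives are exactly such tables with prescribed row and column sums. Having fixed $\{c_{ij}\}$, the permutation is forced into the product subgroup $H_{c_{ij}}=\prod_{i,j}S_{c_{ij}}(C_{ij})$, and the $(i,j)$ block of size $c_{ij}$ then behaves as an independent homogeneous problem of the type solved in Claim \ref{claim:same-state-diff-rep-good}, producing a factor $\mathcal{Z}^{(c_{ij})s_i,s_j'}_{ij}(\tau)$.

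The representation-theoretic heart of the argument, and the step I expect to be the main obstacle, is tracking the characters $\chi^{r_i}$ and $\chi^{r'_j}$ through this partition: a single representation $r_i$ of $S_{n_i}$ is now split across the several blocks $(i,1),(i,2),\dots$ into which its $n_i$ copies are distributed. Once the permutation lies in $H_{c_{ij}}$ I would expand each character in irreducibles of the subgroup via the branching rule $\chi^{r_i}(g)=\sum_{\vec{s}}n_{\vec{s}}^{r_i}\chi^{\vec{s}}(g)$, which is precisely where the branching coefficients $n_{\vec{s}}^{\vec{r}}$ and $n_{\vec{s}'}^{\vec{r}'}$ enter. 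The delicate bookkeeping is then to match the normalizations $1/\sqrt{N!}$ and $\prod_i 1/n_i!$ of \rref{eq:general-state} against the centralizer and double-coset counting so that each block reassembles into exactly the normalized $\mathcal{Z}^{(c_{ij})}$ of Claim \ref{claim:same-state-diff-rep-good}; this is a direct, if intricate, generalization of the combinatorial argument already carried out there.

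Assembling the factorized characters, the block overlaps, and the combinatorial counting yields the advertised formula. Integrality of the multiplicities is then immediate as a corollary, since each $\mathcal{Z}^{(c_{ij})s_i,s_j'}_{ij}$ is a good partition function by Claim \ref{claim:same-state-diff-rep-good} and the branching coefficients $n_{\vec{s}}^{\vec{r}}$ are non-negative integers, so $Z_{\vec{n}\vec{n}'}^{\vec{r}\vec{r}'}$ is a non-negative integer combination of good partition functions. Finally, the vacuum survives only in the fully diagonal table $c_{ij}=n_i\delta_{ij}$, which forces $\vec{n}=\vec{n}'$, and within the diagonal blocks Schur orthogonality of the $S_{n_i}$ characters picks out $\vec{r}=\vec{r}'$, reproducing the required $\delta_{\vec{n}\vec{n}'}\delta_{\vec{r}\vec{r}'}$ and completing the Cardy conditions for the full family \rref{eq:general-state}.
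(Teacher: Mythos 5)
Your proposal is correct and follows essentially the same route as the paper: group the overlap by the contingency table $\{c_{ij}\}$, branch the characters $\chi^{r_i}$ into irreps of $H_{c_{ij}}$, and recognize each block as the homogeneous partition function $\mathcal{Z}^{(c_{ij})s_i,s_j'}_{ij}$, with the $1/n_i!\to\prod_j 1/c_{ij}!$ normalization absorbed by the count of subgroup embeddings. Your framing of the sum over $\{c_{ij}\}$ as a double-coset decomposition of $\prod_i S_{n_i}\backslash S_N/\prod_j S_{n'_j}$ is a slightly cleaner way to organize that last counting step, but it is the same argument.
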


The set of $\lbrace c_{ij} \rbrace$ represents the possible choices of pairing between states. A given $c_{ij}$ indicates that $c_{ij}$ states of type $i$ on the one side are paired with states of type $j$ on the other side. The $\mathcal{Z}^{(c_{ij}){s_i,s_j^\prime}}_{ij}$ correspond to partition functions \eqref{overlapallsamerrp}, but with $N$ replaced by $c_{ij}$.

\begin{proof} 
We prove the claim by considering term by term the sum over $\lbrace c_{ij} \rbrace$ in \eqref{ansatzbdrystate} and counting the  multiplicities of contributions from the two states $\ket{\vec{n},\vec{r}}$ and $\ket{\vec{n^\prime},\vec{r^\prime}}$.  

For a fixed choice of $\lbrace c_{ij} \rbrace$, a permutation in the sum in \eqref{eq:general-state} contributes precisely when, for the bra vector, we have that each $g_i$ is embeddable in a subgroup
\begin{equation}
g_i \in \underset{{{j}}}{\prod} S_{c_{ij}} \subseteq S_{n_i} \, .
\end{equation}
(Or, if we are considering the ket vector, we likewise must have the equivalent condition holding for $g'_j$.) When this is the case we are able to split up the cycles appearing in $g_i$ into further subgroups $S_{c_{ij}}$. 

We will also decompose the character appearing in the full expression of each states in terms of irreducible representations of $H_{c_{ij}}$ using 
\be
\chi^{\vec{r}}(g)=\sum_{\vec{s} \ \textrm{irrep of}\  H_{c_{ij}}} n_{\vec{s}}^{\vec{r}}\chi^{\vec{s}}(g) \,.
\ee 
We then have that the relevant terms for our choice of $\lbrace c_{ij} \rbrace$ can be rearranged in the form 
\begin{align}
    &\ket{\vec{n},\vec{r}}_{|_{H_{c_{ij}}}}=\sum_{\vec{s}} n_{\vec{s}}^{\vec{r}}\ket{\vec{n},\vec{s}}\,,\\
    &\ket{\vec{n^\prime},\vec{r^\prime}}_{|_{H_{c_{ij}}}}=\sum_{\vec{s}^{\;\prime}} n_{\vec{s}^{\;\prime}}^{\vec{r}^{\;\prime}}\ket{\vec{n^\prime},\vec{s^\prime}}.
\end{align}
where we have used the same notation as in the definition of our original state:
\begin{align}
\label{form-1}
    | \vec{n},\vec{s} \rangle& = \frac{1}{\sqrt{N!}}\sum_{h \in S_N}\prod_{i,j} \left( \frac{1}{{c_{ij}!}} \sum_{g_{ij} \in S_{c_{ij}}(C_{ij}) }  \chi^{s_i}(g_{ij})|(a_i)_{h g_{ij} h^{-1}} \rangle \right) \,,\\
    | \vec{n},\vec{s^\prime} \rangle &= \frac{1}{\sqrt{N!}}\sum_{h \in S_N}\prod_{i,j} \left( \frac{1}{{c_{ij}!}} \sum_{g^\prime_{ij} \in S_{c_{ij}}(C_{ij}) }  \chi^{s^\prime_j}(g^\prime_{ij})|(a_i)_{h g^\prime_{ij} h^{-1}} \rangle \right) \,.\label{form-2}
\end{align}
The careful reader will notice that the factors of $n_i!$ we might have expected to appear have been replaced with a product of $c_{ij}!$. We have simply fixed a fiducial ordering of the $S_{c_{ij}}$ in $S_{n_i}$ so that they are mapped into the correct corresponding subgroup of the other state. The change in prefactor accounts for the multiple ways these subgroups could have been embedded. 

Let's now compute the overlap of the relevant piece of the bra and ket of the form \eqref{form-1} and \eqref{form-2}. It is  given by
\begin{equation}
    \frac{1}{N!}\sum_{h,h^\prime \in S_N} \prod_{k,l} \left( \frac{1}{{c_{kl}!}} \sum_{g^\prime_{kl} \in S_{c_{kl}}(C_{kl}) }  \chi^{s^\prime_j}(g^\prime_{kl})\langle b^l_{h^\prime g^\prime_{kl} h^{\prime\,-1}} | \right) \prod_{i,j} \left(\frac{1}{{c_{ij}!}}\sum_{g_{ij} \in S_{c_{ij}}(C_{ij}) }  \chi^{s_i}(g_{ij})|b^i_{h g_{ij} h^{-1}} \rangle \right) \,.
\end{equation}
We can eliminate one of the sums over $h^\prime$ immediately and drop the factor of $N!$ by just keeping one set of indices fixed. 
We also don't want to sum over the remaining $h$ that don't pair the appropriate groups of boundary conditon $i$ with $j$. 
Thus this simplifies to
\begin{equation}
   \sum_{h \in \prod_{kl}  S_{c_{kl}}} \prod_{k,l} \left(\frac{1}{{c_{kl}!}} \sum_{g^\prime_{kl} \in S_{c_{kl}}(C_{kl}) }\chi^{s^\prime_j}(g^\prime_{ij}) \langle b^l_{g^\prime_{kl}}| \right) \prod_{i,j} \left(\frac{1}{{c_{ij}!}} \sum_{g_{ij} \in S_{c_{ij}}(C_{ij}) }\chi^{s_i}(g_{ij}) |b^i_{h g_{ij} h^{-1}} \rangle \right) \,.
\end{equation}

Now, since we aren't permuting the different sectors we can rearrange the products first as 
\begin{equation}
    \prod_{i,j} \left( \frac{1}{{c_{ij}!}^2}  \sum_{h,g_{ij},g^\prime_{ij} \in S_{c_{ij}}(C_{ij}) }  \chi^{s^\prime_j}(g^\prime_{ij})\chi^{s_i}(g_{ij}) \langle b^j_{g^\prime_{ij}} |b^i_{h g_{ij} h^{-1}} \rangle \right) \,,
\end{equation}
and then as
\begin{equation}
    \prod_{i,j} \left( \frac{1}{{c_{ij}!}}  \sum_{g_{ij},g^\prime_{ij} \in S_{c_{ij}}(C_{ij}) } \chi^{s^\prime_j}(g_{ij})\chi^{s_i}(g_{ij}) \langle b^j_{g^\prime_{ij}} |b^i_{ g_{ij} } \rangle \right) \,.
\end{equation}
We immediately recognize this as the overlap that contributes to \eqref{overlapallsamerrp} with $(N)j,j^\prime$ replaced by $(c_{ij})s_i,s^\prime_j$
\begin{equation}
    \prod_{i,j} \mathcal{Z}^{(c_{ij})s_i,s^\prime_j}_{ij}(\tau) \,.
\end{equation}
Gathering everything together we finally obtain,
\begin{equation}
    \label{ansatzbdrystate-2}
    Z_{\vec{n}\vec{n}^\prime}^{\vec{r}\vec{r}^{\;\prime}}(\tau)\equiv \bra{\vec{n},\vec{r}}|\Tilde{q}^{\frac{1}{2}(L_0+\Bar{L}_0-\frac{c}{12})}\ket{\vec{n^\prime},\vec{r^\prime}} =\sum_{\lbrace c_{ij} \rbrace} \left( \sum_{\vec{s},\vec{s}^\prime \ \textrm{irrep of}\  H_{c_{ij}}} n_{\vec{s}}^{\vec{r}} n_{\vec{s}^\prime}^{\vec{r}^\prime}\prod_{i,j} \mathcal{Z}^{(c_{ij}){s_i,s_j^\prime}}_{ij}(\tau)\right)\,.
\end{equation}
\end{proof}

 Our final step is now to confirm that these most general states generate good partition functions:

 \begin{claim}
 The overlaps $Z_{\vec{n}\vec{n}^\prime}^{\vec{r}\vec{r}^{\;\prime}}(\tau)$ defined in Claim \ref{claim:general} are good partition functions.
 \end{claim}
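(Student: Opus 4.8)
The plan is to verify the two defining properties of a good open-string partition function for the expansion \eqref{ansatzbdrystate}: that every coefficient in its $q$-expansion is a non-negative integer, and that the vacuum appears exactly once precisely when the two boundary states coincide, i.e.\ when $\vec{n}=\vec{n}'$ and $\vec{r}=\vec{r}'$. The strategy is to bootstrap entirely from Claim \ref{claim:same-state-diff-rep-good}, which already establishes that each building block $\mathcal{Z}^{(c_{ij})s_i,s_j'}_{ij}(\tau)$ is itself a good partition function with non-negative integer coefficients, so that I never need to reopen the detailed combinatorics of the single-seed case.

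For integrality I would simply observe that the sums over $\lbrace c_{ij}\rbrace$ and over the irreps $\vec{s},\vec{s}'$ in \eqref{ansatzbdrystate} are finite, that the branching multiplicities $n_{\vec{s}}^{\vec{r}}$ and $n_{\vec{s}'}^{\vec{r}'}$ are by definition non-negative integers (they count how often an irrep of $H_{c_{ij}}$ occurs in the restriction of $\vec{r}$, respectively $\vec{r}'$), and that each $\mathcal{Z}^{(c_{ij})s_i,s_j'}_{ij}(\tau)$ has non-negative integer $q$-coefficients by the preceding claim. A product of finitely many power series with non-negative integer coefficients again has non-negative integer coefficients, and multiplying by the non-negative integers $n_{\vec{s}}^{\vec{r}} n_{\vec{s}'}^{\vec{r}'}$ and summing preserves this property. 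Crucially, because every contribution is manifestly non-negative, no cancellation between terms is possible, so each coefficient of $Z_{\vec{n}\vec{n}'}^{\vec{r}\vec{r}'}(\tau)$ is genuinely a non-negative integer.

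For the vacuum I would argue that a vacuum contribution to the full overlap can only arise if every factor $\mathcal{Z}^{(c_{ij})s_i,s_j'}_{ij}(\tau)$ with $c_{ij}\neq 0$ supplies its own vacuum. Since a seed overlap $Z_{ij}$ contains the vacuum only when $i=j$, this forces $c_{ij}=0$ for $i\neq j$; combined with the constraints $\sum_j c_{ij}=n_i$ and $\sum_i c_{ij}=n_j'$, it immediately pins down $\vec{n}=\vec{n}'$ together with the diagonal configuration $c_{ij}=n_i\delta_{ij}$. On this diagonal configuration the subgroup $H_{c_{ij}}$ collapses to $\prod_i S_{n_i}$, which is exactly the group on which $\vec{r}$ and $\vec{r}'$ are already defined, so the branching becomes trivial, $n_{\vec{s}}^{\vec{r}}=\delta_{\vec{s},\vec{r}}$. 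The vacuum count then reduces to a product of the single-block vacuum multiplicities computed in Claim \ref{claim:same-state-diff-rep-good}, each equal to $\delta_{r_i,r_i'}$ by Schur orthogonality, yielding a total vacuum multiplicity of $\delta_{\vec{n},\vec{n}'}\,\delta_{\vec{r},\vec{r}'}$, as required.

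I expect the only genuinely delicate step to be this vacuum analysis, and in particular the two observations that off-diagonal pairings $\lbrace c_{ij}\rbrace$ cannot contribute to the vacuum and that the branching coefficients degenerate to Kronecker deltas on the diagonal so that no spurious vacuum multiplicity survives. The integrality claim, by contrast, is essentially a bookkeeping consequence of the non-negativity already secured in the preceding claims, and should require no new input.
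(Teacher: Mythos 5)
Your proposal is correct and follows essentially the same route as the paper's own proof: integrality is inherited directly from the building blocks of Claim \ref{claim:same-state-diff-rep-good}, and the vacuum analysis forces the diagonal pairing $c_{ij}=n_i\delta_{ij}$, whence $\vec{n}=\vec{n}'$ and $\vec{r}=\vec{r}'$ with unit multiplicity. Your added observation that the branching coefficients collapse to Kronecker deltas on the diagonal configuration makes explicit what the paper states more briefly, but it is the same argument.
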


 \begin{proof}
 Because each of the $\mathcal{Z}^{(c_{ij}){s_i,s_j^\prime}}_{ij}$ has integer coefficients, so do the products and sums of them that appear in \rref{ansatzbdrystate}. 
 Thus we will have integer multiplicities for all states.

It only remains to show the vacuum appears precisely once iff the states are the same. 
As was shown in the proof of Claim \ref{claim:same-state-diff-rep-good}, we only get the vacuum appearing in the sub-partition functions in \eqref{ansatzbdrystate} if $i=j$. For there to exist a $c_{ij} \propto \delta_{ij}$ so that this holds true for every term in the product we must have that $\vec{n} =\vec{n}'$. 
Moreover, when we look at this equal pairing, we do not have to expand our representations into representations of smaller subgroups $S_{c_{ij}}$. The pairing that could potentially generate a vacuum contribution then has the form
\begin{equation}
    \prod_{i} \mathcal{Z}^{(n_i)r_i,r_i^\prime}_{ii}(\tau)
\end{equation}
We also had from before that the vacuum will appear in the sub-partition functions iff the are in the same representation, so that we must have $r_i = r'_i$ for all $i$. 
Thus we conclude that $\vec{n} = \vec{n}'$ and $\vec{r} = \vec{r}'$. 
Lastly, since there is a single pairing that matches the same states together on both sides, we get precisely one copy of the vacuum, as desired. 
\end{proof}

This concludes the proof that we have identified a consistent set of Cardy boundary states of the symmetric orbifold. Finally, note that for every set of $n_i$ copies of the same boundary state $a_i$ we have an Ishibashi state in the orbifold theory for every conjugacy class that links them together into a product of twisted cycles. 
But we have also now shown that for each $n_i$ copies of state $a_i$ we get one boundary state per choice of irreducible representation. 
Thankfully, these irreps are also counted by the same conjugacy classes. 
We are able to conclude we have found as many Cardy states as Ishibashi states and thus our set of Cardy states is \textit{complete} (in this symmetry class).

We now turn to the discussion of their properties following the dictionary of AdS/BCFT. 


\section{AdS/BCFT Data}\label{sec:branetension}

In this section, we will discuss the properties of the boundary states we have just constructed. We will be interested in quantities that are relevant for the AdS/BCFT dictionary. The first quantity will be the boundary entropy, which is mapped to the tension of the ETW brane in the bulk, assuming that is a good description of the BCFT dual. More generally, the boundary entropy should give a rough measure of the depth of the bulk geometry. The second quantity will be the one-point function of single-trace operators. This quantity probes whether bulk matter is turned on in the geometry or not. 

Our goal will be to determine whether a typical boundary state of the orbifold theory has the properties expected of a well-behaved bulk geometry. In particular, we wish to see if typical boundary states are consistent with a local ETW brane.  As we will see, it will be important to distinguish between seed theories with a finite or an infinite number of boundary states, as the resulting outcome will be significantly different. We start by considering the finite seed CFTs.

\subsection{Finite Seed CFTs}

We expect to find a finite number of seed theory boundary states when there are a finite number of Virasoro primary operators. We also expect a finite number of seed states when we consider any rational CFT, but restrict ourselves to those states that respect a particular fixed automorphism of the extended chiral algebra. 

With a finite number of seed boundary states to choose from, in the large $N$ limit the choice of the vector $\vn$ will involve mostly repeated seed theory boundary states. Up to $1/N$ corrections, we can therefore take all the seed boundary states to be the same (i.e. $\vn=N \hat{i}$) and analyze the AdS/BCFT data for such states.\footnote{It's only a small amount of additional work to analyze the truly typical case where each $n_i \approx N/n_b$. The reader can quickly check for themselves that the answer has the same leading large-$N$ behaviour as we compute for the homogeneous case.} The notion of typicality will now constrain what type of representation vector $\vec{r}$ we pick. Since we pick all states to be the same, it involves understanding what a typical representation of $S_N$ looks like. Fortunately, much is known on the topic.

\subsubsection{The Boundary Entropy}

The first quantity we will be interested in computing is the boundary entropy, defined as
\be
g_b=\log\braket{b|0} \,.
\ee
For this boundary entropy to measure the depth of a corresponding well-behaved bulk geometry, we require
\be
g = B N \,,
\ee
where $|B| \leq  \mathcal{O}(1)$. Its sign determines whether the bulk geometry covers more $(B>0)$ or less $(B<0)$ than half of the AdS geometry. 

We would now like to determine the boundary entropy for the states \rref{eq:general-state}, for $\vn=N \hat{i}$ and for some typical representation $r$. There are numerous known results for the distribution of representations of $S_N$ in the large $N$ limit. We will review the relevant results and more details can be found in \cite{kerov2003asymptotic}.

To find the boundary entropy, note that the closed string channel vacuum only has overlap with the product state, whose conjugacy class is just the identity permutation. We would thus like to know the value of $\chi^h(1)$. Note that
\begin{equation}
    \chi^h(1) = d_r \,,
\end{equation}
which is simply the dimension of the representation. The boundary entropy reads
\begin{equation} \label{boundaryentrational}
   g^r_i =  \log \braket{0 | N \hat{i},r \hat{i}} = -\frac{1}{2} \log (N!) + \log(d_r) + N g_i \,,
\end{equation}
where $g_i$ is the entropy of the seed boundary state.
One can immediately derive a bound on $d_r$ by noting that
\begin{eqnarray} \label{sumreps}
        \sum_{\textrm{reps} \ r} d_{r}^2 = N! \,,
\end{eqnarray}
so that we have
\begin{equation}
    g^r_i < N g_i \, . 
\end{equation}

An even more useful consequence of \rref{sumreps} is that 
\begin{eqnarray} \label{plancherel}
        \sum_{\Lambda} \frac{d_{\Lambda}^2}{N!} = 1 \,,
\end{eqnarray}
so that there is a natural probability measure $\mu_N$ on the set of representations of $S_N$ called the Plancherel Measure where the probability of a representation is given by\footnote{It is interesting to note that the question of typical states labelled by a Young diagram has also appeared in other contexts, for example in the case of LLM geometries \cite{Balasubramanian:2005mg}. In that context, the authors use a different measure on the space of representations (the flat measure). It would be interesting to understand the difference between typical Young diagrams with respect to the two measures.  }
\begin{equation}
p(r) = \frac{d_{r}^2}{N!} \, .
\end{equation}
In particular, note that we can then rewrite
\begin{equation} \label{boundaryentrational-2}
   g^r_i = \frac{1}{2} \log(p(r)) + N g_i \,,
\end{equation}
so that the boundary entropy is just the log probability of the representation.  Moreover, the most probable representation will be the one that maximizes the boundary entropy. 
It's also useful to note that the average boundary entropy  is then given by the entropy of the Plancherel measure itself,
\begin{equation}
\langle g_i^r \rangle = -\frac{1}{2}S(p(r)) + Ng_i \, ,
\end{equation}
where $S$ is the usual Shannon entropy of the distribution. 

There are many nice results about typical properties of representations of the symmetric group with respect to the Plancherel measure \cite{kerov2003asymptotic}. For our purposes, the following theorem from \cite{kerov2003asymptotic} is particularly useful:
\begin{thm}
There exist positive constants $c_0$ and  $c_1$ such that 
\begin{equation}
    \lim_{N\rightarrow \infty} \mu_N \lbrace r : c_0 < -\frac{2}{\sqrt{N}} \log\left( \frac{d_r}{\sqrt{N!}} \right) < c_1\rbrace =1 \, .
\end{equation}
\end{thm}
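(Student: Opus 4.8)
The plan is to treat the two inequalities defining the event separately, because the probability that $d_r$ is atypically \emph{small} and the probability that it is atypically \emph{large} are controlled by completely different tools. Writing $X_r := -\tfrac{2}{\sqrt N}\log(d_r/\sqrt{N!})$ (which is $\ge 0$ since $d_r^2 \le \sum_\lambda d_\lambda^2 = N!$), the condition $X_r < c_1$ is equivalent to $d_r > e^{-c_1\sqrt N/2}\sqrt{N!}$, while $c_0 < X_r$ is equivalent to $d_r < e^{-c_0\sqrt N/2}\sqrt{N!}$. Thus I must show that, with $\mu_N$-probability tending to $1$, the dimension $d_r$ lands in the window between $e^{-c_1\sqrt N/2}\sqrt{N!}$ and $e^{-c_0\sqrt N/2}\sqrt{N!}$; equivalently, I must bound $\mu_N\{X_r \ge c_1\}$ and $\mu_N\{X_r \le c_0\}$.

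For the lower edge of the window (ruling out anomalously small $d_r$, i.e.\ controlling $\mu_N\{X_r \ge c_1\}$) the plan is an elementary first-moment estimate built on the classical identity $\sum_r d_r = I_N$, where $I_N$ is the number of involutions in $S_N$ (each involution corresponds under RSK to a single standard Young tableau). On the event $\{d_r \le t\sqrt{N!}\}$ one has $d_r^2/N! \le t\, d_r/\sqrt{N!}$, so
\[ \mu_N\{d_r \le t\sqrt{N!}\} = \sum_{r:\, d_r \le t\sqrt{N!}} \frac{d_r^2}{N!} \le \frac{t}{\sqrt{N!}}\sum_r d_r = \frac{t\, I_N}{\sqrt{N!}}. \]
Stirling together with the standard asymptotics $I_N = N^{N/2}e^{-N/2+\sqrt N + o(\sqrt N)}$ gives $I_N/\sqrt{N!} = e^{(1+o(1))\sqrt N}$. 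Taking $t = e^{-c_1\sqrt N/2}$ yields $\mu_N\{X_r \ge c_1\} \le e^{(1-c_1/2+o(1))\sqrt N}$, which tends to $0$ for any $c_1 > 2$.

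For the upper edge of the window (ruling out anomalously large $d_r$, i.e.\ controlling $\mu_N\{X_r \le c_0\}$) the plan is to invoke the sharp asymptotics of the \emph{maximal} dimension, $\tfrac{1}{\sqrt N}\log\big(\max_r d_r/\sqrt{N!}\big) \to -c$ for an explicit constant $c>0$ (Vershik--Kerov; see~\cite{kerov2003asymptotic}). Being a statement about the largest dimension, it holds for every $r$ simultaneously: for large $N$ one has $d_r \le e^{-(c-\varepsilon)\sqrt N}\sqrt{N!}$ for all $r$, hence $X_r \ge 2(c-\varepsilon) > 0$ deterministically. Thus any $c_0 < 2c$ makes $\{c_0 < X_r\}$ hold with probability $1$ for large $N$. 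Choosing $c_0 \in (0,\min(2c,2))$ and $c_1 > 2$ makes the interval nonempty and both bounds simultaneously valid, so the complement of the event has probability $\to 0$ and the theorem follows.

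The main obstacle lies entirely on this upper-edge side: the maximal-dimension asymptotics is the only genuinely nontrivial ingredient, whereas the involution estimate needs nothing beyond Stirling's formula. To establish (rather than quote) the max-dimension bound I would combine the hook-length formula $d_\lambda = N!/\prod_{\square} h(\square)$ with the limit-shape theorem for Plancherel-random diagrams, writing $\log(d_\lambda/\sqrt{N!}) = \tfrac12\log N! - \sum_\square \log h(\square)$ and passing to a continuous integral over the rescaled diagram. The delicate feature is that the leading $\tfrac{N}{2}\log N$ pieces of the two terms cancel exactly, leaving a surviving contribution of order $\sqrt N$ whose coefficient must be extracted from the next-order behaviour; positivity of that coefficient, maximised over shapes, is precisely what produces the constant $c>0$.
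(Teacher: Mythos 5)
The paper does not prove this statement: it is quoted verbatim as a theorem from Kerov's book \cite{kerov2003asymptotic} (the Vershik--Kerov result on typical Plancherel dimensions), so there is no in-paper argument to compare against. Judged on its own terms, your reduction is structurally sound and your treatment of the $c_1$ side is correct and complete: the first-moment bound $\mu_N\{d_r\le t\sqrt{N!}\}\le t\,I_N/\sqrt{N!}$ via $\sum_r d_r=I_N$ and the involution asymptotics $I_N/\sqrt{N!}=e^{(1+o(1))\sqrt N}$ is exactly the classical Vershik--Kerov argument, and it gives any $c_1>2$. Your $c_0$ side is also logically fine \emph{as a reduction}: the deterministic bound $d_r\le\max_\lambda d_\lambda\le e^{-c_0'\sqrt N/2}\sqrt{N!}$ is precisely the second theorem the paper quotes from the same reference, and only the upper half of that two-sided bound is needed.

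There are, however, two genuine problems in how you propose to supply that last ingredient. First, you assert the \emph{sharp} asymptotics $\tfrac{1}{\sqrt N}\log(\max_r d_r/\sqrt{N!})\to -c$; this limit is not a theorem but a long-standing conjecture --- what is proven (and what the paper quotes) is only a two-sided bound $e^{-c_1'\sqrt N/2}\sqrt{N!}\le\max_r d_r\le e^{-c_0'\sqrt N/2}\sqrt{N!}$ with \emph{different} constants. Your argument survives because it only uses the upper bound, but you should not claim the limit. Second, your sketch of how to prove that upper bound --- hook-length formula plus ``the limit-shape theorem for Plancherel-random diagrams'' --- would fail as stated: the limit-shape theorem controls $\mu_N$-typical diagrams, whereas bounding $\max_\lambda d_\lambda$ requires a \emph{uniform} lower bound on $\sum_\square\log h(\square)-\tfrac12\log N!$ over all $\lambda\vdash N$, including atypical ones. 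The order-$N$ part of that functional vanishes exactly on the optimal shape, so the whole content of the theorem is a uniform $O(\sqrt N)$ lower bound near that shape; this is the hard variational step in Vershik--Kerov and cannot be extracted from the limit-shape statement. The clean fix is simply to cite the max-dimension upper bound as an external input, as the paper does.
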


From this theorem we immediately conclude that, asymptotically, almost every state has entropy
\begin{equation}
  -\frac{c_1}{2}\sqrt{N}  < g^r_i -N g_i < -\frac{c_0}{2}\sqrt{N}  \, . 
\end{equation}
It is known that these coefficients are themselves bounded $c_0 > 0.2313$ and $c_1 < 2.5651$. 
However, there is a numerically well-supported conjecture that asymptotically these bounds become tight. In other words, the conjecture holds that the Plancherel measure concentrates around the mean dimension. 
Thus we can argue that almost every boundary state will have entropy 
\begin{equation}
  g^r_i -N g_i \approx -\frac{C}{2}\sqrt{N}  \, . 
\end{equation}
for some unknown constant $0.23<C<2.56$.\footnote{Numerical evidence suggests that $C>1.8$.}

We have thus argued that for almost every choice of representation $r$ the boundary entropy is Planckian.  The factor of $-\frac{1}{2}N \log N$ in \rref{boundaryentrational} has been cancelled by the dimensionality of the representation, leaving only a subleading $\sqrt{N}$ correction to the Planckian brane tension.
To generalize this case to the truly typical case where we have chosen $\vec{n}$ at random as well as the representation, we need only replace the fixed seed entropy $g_i$ with the average of the seed entropies. 
Thus the typical boundary entropy will be 
\begin{equation}
      g^r_i \approx N\overline{g}  
\end{equation}
for $\overline{g} = n_b^{-1} \sum_i g_i$, up to sub-leading corrections in $N$.

We can also ask about extremal values of the entropy. In this case, we can draw upon another useful theorem from \cite{kerov2003asymptotic}:

\begin{thm}
There exist positive constants $c_0'$ and  $c_1'$ such that for all $N$
\begin{equation}
    e^{-\frac{c_1'}{2}\sqrt{N}}\sqrt{N!} \leq \max_{r} d_r \leq e^{-\frac{c_0'}{2}\sqrt{N}}\sqrt{N!} \
\end{equation}
\end{thm}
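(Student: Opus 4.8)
The plan is to treat the two inequalities separately: the left-hand (lower) bound is elementary, whereas the right-hand (upper) bound is where the real work lies. For the lower bound I would use a simple averaging argument. Since there are $p(N)$ irreducible representations and $\sum_r d_r^2 = N!$ by \rref{sumreps}, the largest dimension obeys
\begin{equation}
\max_r d_r^2 \ge \frac{1}{p(N)}\sum_r d_r^2 = \frac{N!}{p(N)} .
\end{equation}
Inserting the Hardy--Ramanujan-type bound $p(N)\le e^{\pi\sqrt{2N/3}}$, valid for all $N\ge 1$, gives $\max_r d_r \ge e^{-\frac{1}{2}\pi\sqrt{2/3}\,\sqrt N}\sqrt{N!}$, i.e.\ the left inequality with $c_1'=\pi\sqrt{2/3}\approx 2.565$. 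It is worth noting that this reproduces exactly the numerical value quoted for $c_1$ after the previous theorem, as it must, since the typical dimension is never larger than the maximal one.

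For the upper bound I would pass to the hook-length formula $d_\lambda = N!/\prod_{c\in\lambda}h(c)$, under which the desired estimate $\max_\lambda d_\lambda \le e^{-c_0'\sqrt N/2}\sqrt{N!}$ becomes the \emph{uniform} statement
\begin{equation}
\log\frac{\sqrt{N!}}{d_\lambda}=\sum_{c\in\lambda}\log h(c)-\tfrac12\log N! \;\ge\; \tfrac{c_0'}{2}\sqrt N \qquad\text{for every }\lambda\vdash N .
\end{equation}
Rescaling the Young diagram by $1/\sqrt N$ and replacing the sum of hook logarithms by the corresponding Vershik--Kerov double integral over the continuous boundary profile $\psi$, the left-hand side takes the form $\tfrac{\sqrt N}{2}\,\Theta(\psi)+o(\sqrt N)$ for a positive hook functional $\Theta$. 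It then suffices to show that $\Theta$ has a strictly positive infimum over the space of all admissible profiles; the best constant is $c_0'=\inf_\psi\Theta(\psi)>0$, attained at a definite limit shape (the one realised by the maximal-dimension representation, which should not be conflated with the Plancherel-typical Logan--Shepp--Vershik--Kerov shape).

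The step I expect to dominate the work is establishing this positive lower bound on $\Theta$ uniformly, together with the uniformity of the approximation. Two points need care. First, the infimum is a genuine calculus-of-variations problem on an infinite-dimensional, non-compact space of profiles: one must exhibit the Euler--Lagrange minimiser and rule out that $\Theta\to 0$ along any degenerating family. Extreme shapes are in fact harmless---e.g.\ the single-row and single-column diagrams have $d_\lambda=1$, so $\log(\sqrt{N!}/d_\lambda)=\tfrac12\log N!\gg\sqrt N$---but this must be verified for all thin or tall families, not merely the endpoints. Second, the replacement of $\sum_{c}\log h(c)$ by the integral must carry an error that is $o(\sqrt N)$ uniformly in $\lambda$ rather than only pointwise at a fixed shape, and controlling it near the boundary of the diagram, where hooks are small, is the delicate estimate.

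Finally, to upgrade the asymptotic bound to the ``for all $N$'' statement, I would observe that for every $N\ge 2$ one has $\max_r d_r<\sqrt{N!}$ strictly, since $p(N)\ge 2$ forbids any single $d_r^2$ from saturating $\sum_r d_r^2=N!$. Hence the inequality can only fail for finitely many small $N$, and these are absorbed by shrinking $c_0'$; the variational analysis then fixes the optimal $c_0'=\inf_\psi\Theta(\psi)$ in the large-$N$ limit.
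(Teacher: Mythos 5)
The paper does not actually prove this statement: it is quoted from Vershik--Kerov via \cite{kerov2003asymptotic}, so there is no internal proof to compare against. Your lower bound is correct and complete as an elementary argument: averaging $\sum_r d_r^2 = N!$ over the $p(N)$ irreducible representations and inserting the Hardy--Ramanujan bound $p(N)\le e^{\pi\sqrt{2N/3}}$ gives the left inequality with $c_1'=\pi\sqrt{2/3}$, and your observation that this matches the numerical value $2.5651$ quoted after the preceding theorem is accurate.

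The upper bound, however, contains a genuine gap, and it is not merely that the variational problem is left unsolved --- the reduction itself is mis-stated. The hook functional obtained by rescaling the diagram by $1/\sqrt N$ controls $\sum_{c\in\lambda}\log h(c)-\tfrac12\log N!$ at order $N$, not order $\sqrt N$, and that order-$N$ functional vanishes precisely at the Logan--Shepp--Vershik--Kerov limit shape (this degeneracy is exactly why the typical dimension is $e^{-\Theta(\sqrt N)}\sqrt{N!}$ rather than $e^{-\Theta(N)}\sqrt{N!}$). Consequently the infimum of your functional $\Theta$ over admissible profiles is $0$, not a positive constant, and the claimed expansion $\log\bigl(\sqrt{N!}/d_\lambda\bigr)=\tfrac{\sqrt N}{2}\Theta(\psi)+o(\sqrt N)$ with $\inf_\psi\Theta>0$ cannot hold. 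The real content of the Vershik--Kerov upper bound is to show that even for diagrams whose profile approaches the limit shape, the residual sub-leading defect (coming from the second variation and from the discrete-versus-continuous hook corrections near the boundary) is still bounded below by $c\sqrt N$ uniformly in $\lambda$; notably, the optimal constant and the limit shape of the maximizing diagram are not simply read off from the minimizer of the order-$N$ functional. Your closing remark that $\max_r d_r<\sqrt{N!}$ strictly for $N\ge 2$ does not repair this, since a strict inequality at each fixed $N$ does not produce a uniform $e^{-c\sqrt N}$ gap.
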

This shows that the maximum possible entropy isn't far from our mean value, 
\begin{equation}
    g^{r_{max}}_i < N g_i   -\frac{c_0'}{2}\sqrt{N}  \, , 
\end{equation}
so that we can't push the leading order answer in the positive direction. 
On the other hand, the dimension of the trivial representation is $1$. In this case, 
\begin{equation}
    g^{r_{min}}_i = N g_i   -\frac{N}{2} \log N
\end{equation}
and we see that our boundary has a super-Planckian negative entropy. 
In our holographic picture, this would correspond to the boundary condition eating up all but a vanishing sliver of the bulk geometry.

\subsubsection{One-point Functions}

Another quantity of interest is the one-point function of single-trace operators in the presence of the boundary. These one-point functions should diagnose whether or not bulk matter fields are turned on in the geometry. Following \rref{superpositionIshi}, the one-point functions are defined as
\be
p^{O}_i= \frac{\braket{0|O|a_i}}{\braket{0|a_i}} \,.
\ee
The seed theory one-point functions are
\be
p^{h}_i= \frac{(a_i)_h}{e^{g_i}} \,.
\ee
We will now calculate the one-point function of single-trace operators, both in the twisted and untwisted sectors.

\subsubsection*{Untwisted sector operators}
Let us define an untwisted sector primary operator as
\be \label{stuntw}
O=\frac{1}{\sqrt{N}}\sum_{I} O_h^I \,,
\ee
where $O_h$ is a seed theory primary operator who has unit two-point function. The operator \rref{stuntw} is obviously $S_N$ symmetric and is normalized such that it also has unit two-point function. It is a single-trace operator since it is built from a unique seed theory operator. A double-trace operator would involve the product of two seed theory operators, appropriately symmetrized (see \cite{Belin:2015hwa}). We now wish to compute $p^{O}_i$ for $\vn=N\hat{i}$ and $\vec{r}=r\hat{i}$. We have
\be
p^{O}_{N\hat{i},r\hat{i}} =\frac{ \bra{0}^{\otimes N} \frac{1}{\sqrt{N}}\sum_{I} O_h^I \ket{N\hat{i},r\hat{i}} }{\braket{0|N\hat{i},r\hat{i}}}=\sqrt{N} p_i^h \,,
\ee

There are two interesting observations to be made with this formula. First, we note that the one-point functions are in some sense much more universal than the boundary entropies. The final formula is completely independent of the choice of representation $r$. The second observation is the $N$-scaling of the expression. The $\sqrt{N}$ behaviour is exactly what is expected of matter fields that are turned on and that would produce $\mathcal{O}(1)$ backreaction to the geometry, meaning that the dual geometry would have $O(1)$ differences from a simple ETW brane in AdS. In particular, the geometry would not be locally AdS, even away from the brane. We now turn to twisted sector one-point functions.

\subsubsection*{Twisted sector one-point functions}

We will now discuss the one-point function of twisted sector operators. In the twisted sectors, a single-trace operator consists of a conjugacy class with a single cycle \cite{Belin:2015hwa}. The unit-normalized twisted sector ground state operator is given by (see for example \cite{Pakman:2009zz})
\be
\sigma_L = \frac{1}{\sqrt{L \ N! (N-L)!}} \sum_{g\in S_N} \sigma_{g (1 \ ... \ L)g^{-1}} \,.
\ee
We can now compute the one-point function. We find
\bea
p^{L}_{N \hat{i},r\hat{i}}&=&\frac{ \bra{0}^{\otimes N}\frac{1}{\sqrt{L \ N! (N-L)!}} \sum_{g\in S_N} \sigma_{g (1 \ ... \ L)g^{-1}} \ket{N\hat{i},r\hat{i}} }{\braket{0|N\hat{i},r\hat{i}}} \notag \\ \notag
&=&\frac{ \bra{0}^{\otimes N}\frac{1}{\sqrt{L \ N! (N-L)!}} \sum_{g\in S_N} \sigma_{g (1 \ ... \ L)g^{-1}} \frac{1}{\sqrt{N!}}\sum_{h \in S_N} \chi^{r}(h)   |(a_i)_{h} \rangle 
 }{\chi^r(1)\frac{1}{\sqrt{N!}}e^{Ng_i}} \,.
\eea
The sum over $g$ will only be non-zero if
\be
g (1 \ ... \ L)g^{-1}=h^{-1} \,,
\ee
and we can thus get rid of the sum over $g$ since we are anyway summing over the full symmetric group for $h$. This will give a factor of $N!$. The residual sum over $h$ will only be non-zero if $h=(1...L)^{-1}$, which singles out a single term. In total, we thus find
\be
p^{L}_{N \hat{i},r\hat{i}}=\sqrt{\frac{N!}{L(N-L)!}} e^{-g_i (L-1)} \frac{\chi^r(\textrm{L-cycle})}{\chi^{r}(1)} \,.
\ee

To compute this quantity, we need a new piece of data that was not necessary for the boundary entropy or the one-point function of untwisted-sector operators: the value of $\chi^r$ for a representation given by a one-single of length $L$. It turns out that there are also known results for characters of typical representations at large $N$. The most relevant theorem can be found in \cite{kerov2003asymptotic} (equation (3.2.1) and the theorem that follows). In a typical representation and at large $N$, we have
\be
\frac{\chi^r(\textrm{L-cycle})}{\chi^{r}(1)}\sim \sqrt{L} N^{-L/2} \,.
\ee
We thus find that at large $N$ and in a typical representation, we have
\be
p^{L}_{N \hat{i},r\hat{i}}=e^{-g_i (L-1)}  \,.
\ee

This result is quite remarkable. First of all, the $N$-supression of the characters in a typical representation exactly cancels the $N$-enhancement coming from the combinatorics of our state. Without this suppression from the typicality of representations, one would be left with one-point functions that grow as $N^{L/2}$ which means a backreaction that is stronger than $\mathcal{O}(1)$ in the bulk, and is actually enhanced by powers of $1/G_N$. It would be very challenging to make sense of such a geometry. Moreover, the fact that the one-point functions are order one numbers means that unlike untwisted sector operators, these VEVs will only induce a quantum backreaction to the bulk spacetime.

We now turn to seed CFTs with an infinite number of boundary states.

\subsection{Infinite Seed CFTs}

For seed CFTs with an infinite number of boundar states, the situation is completely different. Now a typical state will involve $N$ different seed theory states for $\vn$ is such that $n_i=1$ for all non-trivial entries, and thus such states contain no twisted sector states in the closed string channel. Moreover, this fixes all the representations in $\vec{r}$ to be trivial. As we will see, this drastically changes the combinatorics.

\subsubsection{The Boundary Entropy}

Let us now revisit the boundary entropy. From the expression for our boundary states \rref{eq:general-state}, it is straight forward to calculate the boundary entropy. We find
\be \label{bdryentropygeneral}
g_{\vec{n}}= \log\left[\bra{0}^{\otimes N} \frac{1}{\sqrt{N!}}\sum_{h \in S_N}\prod_{i=1}^{N}   |(a_{h(i)}) \rangle \right]=\log\sqrt{N!}+\sum_{i}n_i g_i \,,
\ee
where $g_i$ are the seed theory boundary entropies. Expanding in the large $N$ limit, we find

\be
g_{\vn}=\log\sqrt{N}+\sum_{i|n_i\neq0} g_i\sim \frac{N}{2} \log N+N\left(\bar{g_i}-\frac{1}{2}\right) \,,
\ee
where $\bar{g_i}$ is the averaged boundary entropy among the $N$ chosen boundary states. This time, we find a boundary entropy that scales as $N\log N$ as the combinatorical term dominates over the averaged seed theory boundary entropy. This means that the tension of the dual brane is super-Planckian! This large of a tension seems incompatible with a simple bulk description where we expect the euclidean distance to the end of the geometry to only be $O(N)$. 

Note that in this infinite case we can still generate an analogous spectrum to the finite case by choosing atypical boundaries built out of a small subset of the seed conditions. 
Thus, we can again get atypical boundary entropies with better behaved entropies
\begin{equation}
    g^{r}_i \approx N g_i     \, , 
\end{equation}
and again we can find solutions that seemingly eat up most of the bulk geometry with a super-Planckian negative entropy
\begin{equation}
    g^{r_{min}}_i = N g_i   -\frac{N}{2} \log N \, .
\end{equation}

\subsubsection{One-point Functions}

Let us now turn our attention to one-point functions. The first stricking difference with infinite seed theories is that the typical states have no support on twisted sectors, hence all the one-point function of twisted sector operators are zero! The untwisted sector one-point function of single-trace operators can easily be computed, and we find

\be
p^{O}_{\vn} =\frac{ \bra{0}^{\otimes N} \frac{1}{\sqrt{N}}\sum_{I} O_h^I \ket{\vn} }{\braket{0|\vn}}=\sqrt{N} \bar{p_i}^h \,,
\ee
where
\be
\bar{p_i}^h=\frac{1}{N}\sum_i p_i^h \,,
\ee
the averaged seed theory one-point function. This means that the one-point function have exactly the right order of magnitude to correspond to $\mathcal{O}(1)$ backreaction due to the matter fields in the bulk. This is the same behaviour as for rational seed theories, indicating that it is much more universal than the bounary entropy. Of course, the fact that the boundary entropy scales super-linearly with $N$ is still problematic for a geometric interpretation.

\section{Discussion}\label{sec:discussion}

In this paper, we have explicitly constructed a complete set of  boundary states for symmetric orbifolds, which respect the full chiral algebra of the orbifold theories. The states are specified by choosing $N$ boundary states of the seed CFT (with repetitions allowed) as well as specifying representations for the multiplicities of each seed boundary state. We proved that the states satisfy the Cardy conditions making them good boundary states of the orbifold theory. We also studied their properties at large $N$, focusing on the boundary entropies and single-trace one-point functions. We found a large difference in the boundary entropy between typical boundary states when the seed CFTs have a finite or infinite number of boundary states, while the one-point functions were similar for both cases and matched expectations from holography. We conclude with some open questions.

\subsection*{Symmetric product orbifolds and RCFT}

The construction of boundary states we presented in this paper has the advantage that it is quite explicit and requires little background to understand. 
On the other hand, symmetric product orbifolds are rational conformal field theories, for which a great deal is known in general about the spectrum of boundary states. (For a useful review, see \cite{Zuber:2000qs}.) 
It would be nice to re-derive our results using these techniques.

\subsection*{Permutation branes}

An important question to ask is whether we have truly characterized all boundary states given our class of boundary conditions. For Cardy states that respect the extended chiral algebra with a trivial automorphism identification, we believe this to be the case. We have a one-to-one map between Ishibashi states satisfying this boundary condition and boundary states.

However, we do expect there to be other boundary states specified by different choices of automorphisms $\Omega$. For $N$-fold tensor product theories without orbifolding, such states were constructed when the seed theory is a Virasoro minimal model in \cite{Recknagel:2002qq}. It is worthwhile to note that the proof that these states satisfy the Cardy conditions uses details of minimal model construction. It is therefore not clear how one would find such states for the $N$-fold product of irrational seed CFTs. 

Our construction, on the other hand, works no matter what choice of seed theory we pick, and is therefore robust (this is the typical expected behaviour of an orbifold theory - once the seed theory is known, we should have a complete knowledge of the orbifold theory). It is possible that Recknagel-type states might be found with reasonable effort for the symmetric orbifold of minimal models, but not for generic choices of seed theory. It would be interesting to understand this better.

Another important point to keep in mind is that the overlaps we have computed \rref{ansatzbdrystate} are strictly built from the untwisted sector of the open string channel, where some representations of $S_N$ are counted a certain number of times. But the spectrum in the open string channel is not altered from the product theory. It is natural to expect that there could also be twisted sector states in the open string channel, which have not appeared in our construction. Following an analogy with the states of \cite{Recknagel:2002qq}, we expect to find such twisted sector states in overlaps of the form
\be
\braket{b,\Omega|b',\Omega'} \,,
\ee
for $\Omega\neq\Omega'$. We leave the investigation of non-trivial $\Omega$ boundary states along with their properties for future work.

\subsection*{Rational vs Irrational seed CFTs}

We have seen that there is a striking difference between choosing finite and infinite seed CFTs in terms of the typicality of boundary states. For finite CFTs, the boundary entropy scales linearly in $N$ and the boundary states will have support on most twisted sectors. On the contrary, for infinite CFTs a typical boundary state will have a boundary entropy scaling like $N\log N$ and no support on the twisted sectors.

At first sight, this is quite confusing. One would think that the orbifold of an irrational CFT is closer to being holographic than the orbifold of a rational CFT, or at least, should not be worse. Indeed, picking an irrational CFT will retain some form of chaos, even if it is highly diluted at large $N$. There is no reason to expect that this makes a typical boundary state less holographic. However, our findings seem to indicate the opposite, since a boundary entropy scaling super-linearly in $N$ is puzzling from a bulk standpoint. It would be very interesting to understand this fact better.

\subsection*{Conformal perturbation theory}

Another avenue to explore is to turn on the marginal deformation that makes the $N$ copies of the orbifold theory interact. This can be done for specific seed theories with supersymmetry, like the non-linear sigma model on $T^4$ or K3, or the family of $\mathcal{N}=(2,2)$ theories described in \cite{Belin:2020nmp}. One could then study what happens to the boundary states we have classified under this deformation.

It is worthwhile to note that even at the orbifold point, we expect an infinite number of Virasoro Cardy states. We have only classified a finite number of them, focusing on those that respect the extended chiral algebra \rref{chiralalgebracond}. Away from the orbifold point, the extended chiral algebra no longer exists and one would naively expect that the boundary states we have found mix with those that do not respect the extended chiral algebra in some complicated way. It would be interesting to study this question in more detail, specifically in light of the recent technical progress on conformal perturbation theory (see for example \cite{Benjamin:2021zkn}).

\subsection*{Is there a bulk geometry?}

One of the most important questions in this endeavour is of course to understand whether the boundary states we have constructed can truly be thought of as having a description as an ETW brane, perhaps with some matter fields turned on, or even more generally as a higher-dimensional geometry. Of course, the gravity dual to a general symmetric orbifold is not expected to be well described by Einstein gravity, and would involve a theory of gravity where the string scale is of the same order as the AdS scale. In such a setup, the notion of a bulk geometry may not be so clear.

Related to this, we have found that the VEVs of untwisted sector single-trace operators lead to an $\mathcal{O}(1)$ backreaction on the geometry. In a top-down construction where some cycle in the internal manifold caps off smoothly, we would have a similar situation with an infinite tower of single-trace charged operators (under the R symmetry) that have $\mathcal{O}(\sqrt{N})$ VEVs. There is however an important difference between a true holographic theory and a symmetric orbifold. In a holographic theory, the number of single-trace operators at a given dimension grows polynomially with dimension (this can be understood essentially from the KK reduction of the internal manifold onto AdS). In a symmetric orbifold, the growth of untwisted sector operators is superpolynomial and grows as $e^{\sqrt{\Delta}}$. Therefore there are \textit{many} more operators being turned on, which could provide challenging for a geometric intrepretation.

Moreover, finding a boundary entropy to be $\mathcal{O}(N)$ and one-point functions no greater than $\sqrt{N}$ is a necessary condition, but of course not a sufficient one. There are other constraints required by having a consistent bulk causal structure \cite{Reeves:2021sab}. Such conditions could in principle be checked explicitly in our boundary states, and we leave it for future work. It would be interesting to see if in top-down constructions where the bulk worldsheet theory dual is known \cite{Eberhardt:2018ouy}, one can interpret some of these boundary states as having a dual description in terms of geometry.

\section*{Acknowledgements}

We are happy to thank Luis Apolo, Suzanne Bintanja, Alejandra Castro, Jan de Boer, Arjun Kar, Christoph Keller, Mark van Raamsdonk, Gordon Semenoff, and Joanna Karczmarek for fruitful discussions.


\appendix
\section{Examples}\label{sec:appendix-1}
In the section, we write down expressions of the boundary states \eqref{eq:general-state}  and the open string partition functions for $N=2,3$.

\subsection*{$N=2:$}
We can take both seed theory boundary states either to be the same or different. When they are the same, the boundary states are 
\begin{align}
    \ket{(2),1}\equiv\ket{aa^1}&=\frac{1}{\sqrt{2}}\left(\ket{a_{(1)}}\ket{a_{(2)}}+\ket{a_{(12)}}\right)\,\\
  \ket{(2),2}\equiv\ket{aa^2}&=\frac{1}{\sqrt{2}}\left(\ket{a_{(1)}}\ket{a_{(2)}}-\ket{a_{(12)}}\right)\,.
\end{align}
Note that the characters take values from the character table of $S_2$ corresponding to the two irreps of the group. When the seed boundary states are different, we have
\begin{align}
      \ket{(1,1),1}\equiv\ket{ab^1}=\frac{1}{\sqrt{2}}\left(\ket{a_{(1)}}\ket{b_{(2)}}+\ket{b_{(1)}}\ket{a_{(2)}}\right)\,.
\end{align}
The open string partition functions resulting from the overlaps of these states are
\begin{align}
    &Z_{aa^1,\alpha\alpha^1}(\tau)=\frac{1}{2}\left(Z_{a\alpha}^{2}(\tau)+Z_{a\alpha}(2\tau)\right)=Z_{aa^2,\alpha\alpha^2}(\tau)\,,\nonumber\\
     &Z_{aa^1,\alpha\alpha^2}(\tau)=\frac{1}{2}\left(Z_{a\alpha}^{2}(\tau)-Z_{a\alpha}(2\tau)\right)\,,\nonumber\\
     &Z_{aa^1,\alpha\beta^1}(\tau)=Z_{a\alpha}(\tau)Z_{a\beta}(\tau)=Z_{aa^2,\alpha\beta^1}(\tau)\,,\nonumber\\
     &Z_{ab^1,\alpha\beta^1}(\tau)=Z_{a\alpha}(\tau)Z_{b\beta}(\tau)+Z_{a\beta}(\tau)Z_{b\alpha}(\tau).
  \end{align}
  \subsection*{$N=3:$}
  In this case, we have all three seed boundary states are identical, two are identical, or all three are different. The boundary states are
  \begin{align}
      \ket{(3),1}&\equiv\ket{aaa^1}=\frac{1}{\sqrt{6}}\left(\ket{a_{(1)}}\ket{a_{(2)}}\ket{a_{(3)}}+\ket{a}_2+\ket{a}_3\right)\,,\nonumber\\
       \ket{(3),2}&\equiv\ket{aaa^2}=\frac{1}{\sqrt{6}}\left(\ket{a_{(1)}}\ket{a_{(2)}}\ket{a_{(3)}}-\ket{a}_2+\ket{a}_3\right)\,,\nonumber\\
        \ket{(3),3}&\equiv\ket{aaa^3}=\frac{1}{\sqrt{6}}\left(2\ket{a_{(1)}}\ket{a_{(2)}}\ket{a_{(3)}}-\ket{a}_3\right)\,.\label{allsame}
  \end{align}
  Here, $\ket{a}_2\equiv\left(\ket{a_{(1)}}\ket{a_{(23)}}+\ket{a_{(2)}}\ket{a_{(13)}}+\ket{a_{(3)}}\ket{a_{(12)}}\right)$ and $\ket{a}_3\equiv \left(\ket{a_{(123)}}+\ket{a_{(132)}}\right)$.
Again, the coefficients multiplying the twisted sectors belonging to different conjugacy classes take value from the character table of $S_3$. The remaining states are
\begin{align}
\ket{(2,1),(1,1)}
&=\frac{1}{\sqrt{6}}\left(\ket{a_{(1)}}\ket{a_{(2)}}\ket{b_{(3)}}+\ket{a_{(1)}}\ket{b_{(2)}}\ket{a_{(3)}}+\ket{b_{(1)}}\ket{a_{(2)}}\ket{a_{(3)}}+\ket{b}\ket{a}_2\right)\nonumber\\
&\equiv    \ket{aa^1b^1}\;\nonumber\\
\ket{(2,1),(2,1)}
&=\frac{1}{\sqrt{6}}\left(\ket{a_{(1)}}\ket{a_{(2)}}\ket{b_{(3)}}+\ket{a_{(1)}}\ket{b_{(2)}}\ket{a_{(3)}}+\ket{b_{(1)}}\ket{a_{(2)}}\ket{a_{(3)}}-\ket{b}\ket{a}_2\right)\nonumber\\
&\equiv    \ket{aa^2b^1}\,,\nonumber\\
\ket{(1,1,1),(1,1,1)}&=\frac{1}{\sqrt{6}}\left(\ket{a_{P(1)}}\ket{b_{P(2)}}\ket{c_{P(3)}}+\text{permutations }\right)\equiv \ket{abc}\label{not-allsame}
\end{align}
The open string partition functions resulting from the set \eqref{allsame} are
\begin{align}
    &Z_{aaa^1,\alpha\alpha\alpha^1}(\tau)=\frac{1}{6}Z_{a\alpha}^3(\tau)+\frac{1}{2}Z_{a\alpha}(\tau)Z_{a\alpha}(2\tau)+\frac{1}{3}Z_{a\alpha}(3\tau)=Z_{aaa^2,\alpha\alpha\alpha^2}(\tau)\,,\\
    &Z_{aaa^3,\alpha\alpha\alpha^3}(\tau)=\frac{2}{3}Z_{a\alpha}^3(\tau)+\frac{1}{3}Z_{a\alpha}(3\tau)=Z_{aaa^1,\alpha\alpha\alpha^3}(\tau)\,,\\
    &Z_{aaa^1,\alpha\alpha\alpha^2}(\tau)=\frac{1}{6}Z_{a\alpha}^3(\tau)-\frac{1}{2}Z_{a\alpha}(\tau)Z_{a\alpha}(2\tau)+\frac{1}{3}Z_{a\alpha}(3\tau)\,,\\\label{integer-test}
    &Z_{aaa^1,\alpha\alpha\alpha^3}(\tau)=\frac{1}{3}Z_{a\alpha}^3(\tau)-\frac{1}{3}Z_{a\alpha}(3\tau)=Z_{aaa^2,\alpha\alpha\alpha^3}(\tau)\,.
\end{align}
In section \ref{sec:bdrystates}, we have proved that the above partition functions each state appear in an integer  number of time. It is instructive to see how it works explicitly. For example, The states that can appear in partition function \eqref{integer-test} will have dimensions of the form
\begin{align}
    h = h_i + h_j + h_k \quad \mathrm{for} \quad h_i \neq h_j \neq h_k \\
    h = 2h_i + h_j \quad \mathrm{for} \quad h_i \neq h_j \\
    h = 3h_i
\end{align}
Let's consider these three cases in turn. 
In the first case this can only come from the first term in the partition function. 
We get multiple copies (6) from this single term by summing over which of the $Z$s in the product each state comes from. This gives $6\times1/6 = 1$ copy of the state.
In the second case we can get contributions from the first two terms in the partition function, but not the third. From the first term we now get three copies and from the second we get one copy. Thus we have $3\times1/6 - 1/2 = 0$ copies of this state. 
In the third case we get contributions from all three terms, and precisely one from each. Thus we have $1/6-1/2 + 1/3 = 0$ copies of this state.\\
The partition functions involving the boundary states in \eqref{not-allsame} are
\begin{align}
    &Z_{aa^1b,\alpha\alpha^1\beta}(\tau)=Z_{a\alpha}(\tau)Z_{b\alpha}(\tau)Z_{a\beta}(\tau)+\frac{1}{2}\left(Z^{2}_{a\alpha}(\tau)+(Z_{a\alpha}(2\tau)\right)Z_{b\beta}(\tau)=Z_{aa^2b,\alpha\alpha^2\beta}(\tau)\,,\nonumber\\
    &Z_{aa^1b,\alpha\alpha^2\beta}(\tau)=Z_{a\alpha}(\tau)Z_{b\alpha}(\tau)Z_{a\beta}(\tau)+\frac{1}{2}\left(Z^{2}_{a\alpha}(\tau)-(Z_{a\alpha}(2\tau)\right)Z_{b\beta}(\tau)\,,\nonumber\\
    &Z_{abc,\alpha\beta\gamma}(\tau)=Z_{a\alpha}(\tau)Z_{b\beta}(\tau)Z_{c\gamma}(\tau)+\text{permutations $(\alpha,\beta,\gamma)$}.
\end{align}
 Finally we have
 \begin{align}
     &Z_{aa^1b,\alpha\alpha\alpha^1}(\tau)=\frac{1}{2}\left(Z^2_{a\alpha}(\tau)+Z_{a\alpha}(2\tau)\right)Z_{b\alpha}(\tau)=Z_{aa^2b,\alpha\alpha\alpha^2}(\tau)\,,\nonumber\\
      &Z_{aa^1b,\alpha\alpha\alpha^2}(\tau)=\frac{1}{2}\left(Z^2_{a\alpha}(\tau)-Z_{a\alpha}(2\tau)\right)Z_{b\alpha}(\tau)=Z_{aa^2b,\alpha\alpha\alpha^1}(\tau)\,,\nonumber\\
      &Z_{aa^1b,\alpha\alpha\alpha^3}(\tau)=Z_{a\alpha}^2(\tau)Z_{b\alpha}(\tau)= Z_{b\alpha}(\tau)\left(Z_{aa^1,\alpha\alpha^1}(\tau)+Z_{aa^1,\alpha\alpha^2}(\tau)\right)=Z_{aa^2b,\alpha\alpha\alpha^3}(\tau)\,,\nonumber\\
      &Z_{abc,\alpha\alpha\alpha^{1,2,3}}(\tau)=Z_{a\alpha}(\tau)Z_{b\alpha}(\tau)Z_{c\alpha}(\tau)\,,\nonumber\\
      &Z_{abc,\alpha\alpha^{1,2}\beta}(\tau)=Z_{a\alpha}(\tau)Z_{b\alpha}(\tau)Z_{c\beta}(\tau)+Z_{a\alpha}(\tau)Z_{b\alpha}(\tau)Z_{c\beta}(\tau)+Z_{a\beta}(\tau)Z_{b\alpha}(\tau)Z_{c\alpha}(\tau)\,.
 \end{align}

\bibliographystyle{jhep}
\bibliography{refs2}

\providecommand{\href}[2]{#2}\begingroup\raggedright\begin{thebibliography}{10}

\bibitem{1908.10996}
A.~Almheiri, R.~Mahajan, J.~Maldacena and Y.~Zhao, \emph{{The Page curve of
  Hawking radiation from semiclassical geometry}},
  \href{https://doi.org/10.1007/JHEP03(2020)149}{\emph{JHEP} {\bfseries 03}
  (2020) 149} [\href{https://arxiv.org/abs/1908.10996}{{\ttfamily
  1908.10996}}].

\bibitem{1910.12836}
M.~Rozali, J.~Sully, M.~Van~Raamsdonk, C.~Waddell and D.~Wakeham,
  \emph{{Information radiation in BCFT models of black holes}},
  \href{https://doi.org/10.1007/JHEP05(2020)004}{\emph{JHEP} {\bfseries 05}
  (2020) 004} [\href{https://arxiv.org/abs/1910.12836}{{\ttfamily
  1910.12836}}].

\bibitem{2006.04851}
H.~Z. Chen, R.~C. Myers, D.~Neuenfeld, I.~A. Reyes and J.~Sandor,
  \emph{{Quantum Extremal Islands Made Easy, Part I: Entanglement on the
  Brane}}, \href{https://doi.org/10.1007/JHEP10(2020)166}{\emph{JHEP}
  {\bfseries 10} (2020) 166}
  [\href{https://arxiv.org/abs/2006.04851}{{\ttfamily 2006.04851}}].

\bibitem{2010.00018}
H.~Z. Chen, R.~C. Myers, D.~Neuenfeld, I.~A. Reyes and J.~Sandor,
  \emph{{Quantum Extremal Islands Made Easy, Part II: Black Holes on the
  Brane}}, \href{https://doi.org/10.1007/JHEP12(2020)025}{\emph{JHEP}
  {\bfseries 12} (2020) 025}
  [\href{https://arxiv.org/abs/2010.00018}{{\ttfamily 2010.00018}}].

\bibitem{2012.04671}
H.~Geng, A.~Karch, C.~Perez-Pardavila, S.~Raju, L.~Randall, M.~Riojas et~al.,
  \emph{{Information Transfer with a Gravitating Bath}},
  \href{https://doi.org/10.21468/SciPostPhys.10.5.103}{\emph{SciPost Phys.}
  {\bfseries 10} (2021) 103}
  [\href{https://arxiv.org/abs/2012.04671}{{\ttfamily 2012.04671}}].

\bibitem{2105.00008}
C.~F. Uhlemann, \emph{{Islands and Page curves in 4d from Type IIB}},
  \href{https://doi.org/10.1007/JHEP08(2021)104}{\emph{JHEP} {\bfseries 08}
  (2021) 104} [\href{https://arxiv.org/abs/2105.00008}{{\ttfamily
  2105.00008}}].

\bibitem{Hollowood:2021wkw}
T.~J. Hollowood, S.~P. Kumar, A.~Legramandi and N.~Talwar, \emph{{Ephemeral
  Islands, Plunging Quantum Extremal Surfaces and BCFT channels}},
  \href{https://arxiv.org/abs/2109.01895}{{\ttfamily 2109.01895}}.

\bibitem{Geng:2021iyq}
H.~Geng, S.~L\"ust, R.~K. Mishra and D.~Wakeham, \emph{{Holographic BCFTs and
  Communicating Black Holes}},
  \href{https://doi.org/10.1007/JHEP08(2021)003}{\emph{jhep} {\bfseries 08}
  (2021) 003} [\href{https://arxiv.org/abs/2104.07039}{{\ttfamily
  2104.07039}}].

\bibitem{Ageev:2021ipd}
D.~S. Ageev, \emph{{Shaping contours of entanglement islands in BCFT}},
  \href{https://arxiv.org/abs/2107.09083}{{\ttfamily 2107.09083}}.

\bibitem{1810.10601}
S.~Cooper, M.~Rozali, B.~Swingle, M.~Van~Raamsdonk, C.~Waddell and D.~Wakeham,
  \emph{{Black hole microstate cosmology}},
  \href{https://doi.org/10.1007/JHEP07(2019)065}{\emph{JHEP} {\bfseries 07}
  (2019) 065} [\href{https://arxiv.org/abs/1810.10601}{{\ttfamily
  1810.10601}}].

\bibitem{1907.06667}
S.~Antonini and B.~Swingle, \emph{{Cosmology at the end of the world}},
  \href{https://doi.org/10.1038/s41567-020-0909-6}{\emph{Nature Phys.}
  {\bfseries 16} (2020) 881}
  [\href{https://arxiv.org/abs/1907.06667}{{\ttfamily 1907.06667}}].

\bibitem{2008.02259}
M.~Van~Raamsdonk, \emph{{Comments on wormholes, ensembles, and cosmology}},
  \href{https://arxiv.org/abs/2008.02259}{{\ttfamily 2008.02259}}.

\bibitem{2102.05057}
M.~Van~Raamsdonk, \emph{{Cosmology from confinement?}},
  \href{https://arxiv.org/abs/2102.05057}{{\ttfamily 2102.05057}}.

\bibitem{1803.04434}
A.~Almheiri, A.~Mousatov and M.~Shyani, \emph{{Escaping the Interiors of Pure
  Boundary-State Black Holes}},
  \href{https://arxiv.org/abs/1803.04434}{{\ttfamily 1803.04434}}.

\bibitem{Kourkoulou:2017zaj}
I.~Kourkoulou and J.~Maldacena, \emph{{Pure states in the SYK model and
  nearly-$AdS_2$ gravity}},  \href{https://arxiv.org/abs/1707.02325}{{\ttfamily
  1707.02325}}.

\bibitem{1603.02889}
P.~Calabrese and J.~Cardy, \emph{{Quantum quenches in 1 + 1 dimensional
  conformal field theories}},
  \href{https://doi.org/10.1088/1742-5468/2016/06/064003}{\emph{J. Stat. Mech.}
  {\bfseries 1606} (2016) 064003}
  [\href{https://arxiv.org/abs/1603.02889}{{\ttfamily 1603.02889}}].

\bibitem{hep-th/0011156}
A.~Karch and L.~Randall, \emph{{Locally localized gravity}},
  \href{https://doi.org/10.1088/1126-6708/2001/05/008}{\emph{JHEP} {\bfseries
  05} (2001) 008} [\href{https://arxiv.org/abs/hep-th/0011156}{{\ttfamily
  hep-th/0011156}}].

\bibitem{hep-th/0105132}
A.~Karch and L.~Randall, \emph{{Open and closed string interpretation of SUSY
  CFT's on branes with boundaries}},
  \href{https://doi.org/10.1088/1126-6708/2001/06/063}{\emph{JHEP} {\bfseries
  06} (2001) 063} [\href{https://arxiv.org/abs/hep-th/0105132}{{\ttfamily
  hep-th/0105132}}].

\bibitem{1105.5165}
T.~Takayanagi, \emph{{Holographic Dual of BCFT}},
  \href{https://doi.org/10.1103/PhysRevLett.107.101602}{\emph{Phys. Rev. Lett.}
  {\bfseries 107} (2011) 101602}
  [\href{https://arxiv.org/abs/1105.5165}{{\ttfamily 1105.5165}}].

\bibitem{1108.5152}
M.~Fujita, T.~Takayanagi and E.~Tonni, \emph{{Aspects of AdS/BCFT}},
  \href{https://doi.org/10.1007/JHEP11(2011)043}{\emph{JHEP} {\bfseries 11}
  (2011) 043} [\href{https://arxiv.org/abs/1108.5152}{{\ttfamily 1108.5152}}].

\bibitem{1111.6912}
M.~Chiodaroli, E.~D'Hoker and M.~Gutperle, \emph{{Simple Holographic Duals to
  Boundary CFTs}}, \href{https://doi.org/10.1007/JHEP02(2012)005}{\emph{JHEP}
  {\bfseries 02} (2012) 005} [\href{https://arxiv.org/abs/1111.6912}{{\ttfamily
  1111.6912}}].

\bibitem{1205.5303}
M.~Chiodaroli, E.~D'Hoker and M.~Gutperle, \emph{{Holographic duals of Boundary
  CFTs}}, \href{https://doi.org/10.1007/JHEP07(2012)177}{\emph{JHEP} {\bfseries
  07} (2012) 177} [\href{https://arxiv.org/abs/1205.5303}{{\ttfamily
  1205.5303}}].

\bibitem{1304.4389}
L.~Berdichevsky and B.-e. Dahan, \emph{{Local gravitational solutions dual to
  M2-branes intersecting and/or ending on M5-branes}},
  \href{https://doi.org/10.1007/JHEP08(2013)061}{\emph{JHEP} {\bfseries 08}
  (2013) 061} [\href{https://arxiv.org/abs/1304.4389}{{\ttfamily 1304.4389}}].

\bibitem{1312.5477}
C.~Bachas, E.~D'Hoker, J.~Estes and D.~Krym, \emph{{M-theory Solutions
  Invariant under $D(2,1;\gamma) \oplus D(2,1;\gamma)$}},
  \href{https://doi.org/10.1002/prop.201300039}{\emph{Fortsch. Phys.}
  {\bfseries 62} (2014) 207} [\href{https://arxiv.org/abs/1312.5477}{{\ttfamily
  1312.5477}}].

\bibitem{0705.0022}
E.~D'Hoker, J.~Estes and M.~Gutperle, \emph{{Exact half-BPS Type IIB interface
  solutions. I. Local solution and supersymmetric Janus}},
  \href{https://doi.org/10.1088/1126-6708/2007/06/021}{\emph{JHEP} {\bfseries
  06} (2007) 021} [\href{https://arxiv.org/abs/0705.0022}{{\ttfamily
  0705.0022}}].

\bibitem{0705.0024}
E.~D'Hoker, J.~Estes and M.~Gutperle, \emph{{Exact half-BPS Type IIB interface
  solutions. II. Flux solutions and multi-Janus}},
  \href{https://doi.org/10.1088/1126-6708/2007/06/022}{\emph{JHEP} {\bfseries
  06} (2007) 022} [\href{https://arxiv.org/abs/0705.0024}{{\ttfamily
  0705.0024}}].

\bibitem{1106.1870}
O.~Aharony, L.~Berdichevsky, M.~Berkooz and I.~Shamir, \emph{{Near-horizon
  solutions for D3-branes ending on 5-branes}},
  \href{https://doi.org/10.1103/PhysRevD.84.126003}{\emph{Phys. Rev. D}
  {\bfseries 84} (2011) 126003}
  [\href{https://arxiv.org/abs/1106.1870}{{\ttfamily 1106.1870}}].

\bibitem{0804.2902}
D.~Gaiotto and E.~Witten, \emph{{Supersymmetric Boundary Conditions in N=4
  Super Yang-Mills Theory}},
  \href{https://doi.org/10.1007/s10955-009-9687-3}{\emph{J. Statist. Phys.}
  {\bfseries 135} (2009) 789}
  [\href{https://arxiv.org/abs/0804.2902}{{\ttfamily 0804.2902}}].

\bibitem{0807.3720}
D.~Gaiotto and E.~Witten, \emph{{S-Duality of Boundary Conditions In N=4 Super
  Yang-Mills Theory}},
  \href{https://doi.org/10.4310/ATMP.2009.v13.n3.a5}{\emph{Adv. Theor. Math.
  Phys.} {\bfseries 13} (2009) 721}
  [\href{https://arxiv.org/abs/0807.3720}{{\ttfamily 0807.3720}}].

\bibitem{Miyaji:2021ktr}
M.~Miyaji, T.~Takayanagi and T.~Ugajin, \emph{{Spectrum of End of the World
  Branes in Holographic BCFTs}},
  \href{https://doi.org/10.1007/JHEP06(2021)023}{\emph{JHEP} {\bfseries 06}
  (2021) 023} [\href{https://arxiv.org/abs/2103.06893}{{\ttfamily
  2103.06893}}].

\bibitem{Raamsdonk:2020tin}
M.~V. Raamsdonk and C.~Waddell, \emph{{Holographic and localization
  calculations of boundary F for $ \mathcal{N} $ = 4 SUSY Yang-Mills theory}},
  \href{https://doi.org/10.1007/JHEP02(2021)222}{\emph{JHEP} {\bfseries 02}
  (2021) 222} [\href{https://arxiv.org/abs/2010.14520}{{\ttfamily
  2010.14520}}].

\bibitem{VanRaamsdonk:2021duo}
M.~Van~Raamsdonk and C.~Waddell, \emph{{Finding $AdS^{5} \times S^{5}$ in 2+1
  dimensional SCFT physics}},
  \href{https://arxiv.org/abs/2109.04479}{{\ttfamily 2109.04479}}.

\bibitem{1106.4253}
B.~Assel, C.~Bachas, J.~Estes and J.~Gomis, \emph{{Holographic Duals of D=3 N=4
  Superconformal Field Theories}},
  \href{https://doi.org/10.1007/JHEP08(2011)087}{\emph{JHEP} {\bfseries 08}
  (2011) 087} [\href{https://arxiv.org/abs/1106.4253}{{\ttfamily 1106.4253}}].

\bibitem{1206.2920}
B.~Assel, J.~Estes and M.~Yamazaki, \emph{{Large N Free Energy of 3d N=4 SCFTs
  and $AdS_4/CFT_3$}},
  \href{https://doi.org/10.1007/JHEP09(2012)074}{\emph{JHEP} {\bfseries 09}
  (2012) 074} [\href{https://arxiv.org/abs/1206.2920}{{\ttfamily 1206.2920}}].

\bibitem{1812.00923}
J.~Estes, D.~Krym, A.~O'Bannon, B.~Robinson and R.~Rodgers, \emph{{Wilson
  Surface Central Charge from Holographic Entanglement Entropy}},
  \href{https://doi.org/10.1007/JHEP05(2019)032}{\emph{JHEP} {\bfseries 05}
  (2019) 032} [\href{https://arxiv.org/abs/1812.00923}{{\ttfamily
  1812.00923}}].

\bibitem{2005.10833}
K.~Goto, L.~Nagano, T.~Nishioka and T.~Okuda, \emph{{Janus interface entropy
  and Calabi's diastasis in four-dimensional $\mathcal{N}=2$ superconformal
  field theories}}, \href{https://doi.org/10.1007/JHEP08(2020)048}{\emph{JHEP}
  {\bfseries 08} (2020) 048}
  [\href{https://arxiv.org/abs/2005.10833}{{\ttfamily 2005.10833}}].

\bibitem{Keller:2011xi}
C.~A. Keller, \emph{{Phase transitions in symmetric orbifold CFTs and
  universality}}, \href{https://doi.org/10.1007/JHEP03(2011)114}{\emph{JHEP}
  {\bfseries 03} (2011) 114} [\href{https://arxiv.org/abs/1101.4937}{{\ttfamily
  1101.4937}}].

\bibitem{Pakman:2009zz}
A.~Pakman, L.~Rastelli and S.~S. Razamat, \emph{{Diagrams for Symmetric Product
  Orbifolds}}, \href{https://doi.org/10.1088/1126-6708/2009/10/034}{\emph{JHEP}
  {\bfseries 10} (2009) 034} [\href{https://arxiv.org/abs/0905.3448}{{\ttfamily
  0905.3448}}].

\bibitem{Belin:2016dcu}
A.~Belin, B.~Freivogel, R.~A. Jefferson and L.~Kabir, \emph{{Sub-AdS Scale
  Locality in AdS$_3$/CFT$_2$}},
  \href{https://arxiv.org/abs/1611.08601}{{\ttfamily 1611.08601}}.

\bibitem{Belin:2018oza}
A.~Belin, A.~Castro, J.~Gomes and C.~A. Keller, \emph{{Siegel paramodular forms
  and sparseness in AdS$_{3}$/CFT$_{2}$}},
  \href{https://doi.org/10.1007/JHEP11(2018)037}{\emph{JHEP} {\bfseries 11}
  (2018) 037} [\href{https://arxiv.org/abs/1805.09336}{{\ttfamily
  1805.09336}}].

\bibitem{Belin:2017jli}
A.~Belin, \emph{{Permutation Orbifolds and Chaos}},
  \href{https://doi.org/10.1007/JHEP11(2017)131}{\emph{JHEP} {\bfseries 11}
  (2017) 131} [\href{https://arxiv.org/abs/1705.08451}{{\ttfamily
  1705.08451}}].

\bibitem{Haehl:2014yla}
F.~M. Haehl and M.~Rangamani, \emph{{Permutation orbifolds and holography}},
  \href{https://doi.org/10.1007/JHEP03(2015)163}{\emph{JHEP} {\bfseries 03}
  (2015) 163} [\href{https://arxiv.org/abs/1412.2759}{{\ttfamily 1412.2759}}].

\bibitem{Belin:2014fna}
A.~Belin, C.~A. Keller and A.~Maloney, \emph{{String Universality for
  Permutation Orbifolds}},
  \href{https://doi.org/10.1103/PhysRevD.91.106005}{\emph{Phys. Rev.}
  {\bfseries D91} (2015) 106005}
  [\href{https://arxiv.org/abs/1412.7159}{{\ttfamily 1412.7159}}].

\bibitem{Belin:2015hwa}
A.~Belin, C.~A. Keller and A.~Maloney, \emph{{Permutation Orbifolds in the
  large N Limit}},
  \href{https://doi.org/10.1007/s00023-016-0529-y}{\emph{Annales Henri
  Poincare} (2016) 1} [\href{https://arxiv.org/abs/1509.01256}{{\ttfamily
  1509.01256}}].

\bibitem{Keller:2017rtk}
C.~A. Keller and B.~J. M\"uhlmann, \emph{{The Spectrum of Permutation
  Orbifolds}}, \href{https://doi.org/10.1007/s11005-019-01162-2}{\emph{Lett.
  Math. Phys.} {\bfseries 109} (2019) 1559}
  [\href{https://arxiv.org/abs/1708.01258}{{\ttfamily 1708.01258}}].

\bibitem{Reeves:2021sab}
W.~Reeves, M.~Rozali, P.~Simidzija, J.~Sully, C.~Waddell and D.~Wakeham,
  \emph{{Looking for (and not finding) a bulk brane}},
  \href{https://arxiv.org/abs/2108.10345}{{\ttfamily 2108.10345}}.

\bibitem{gaberdiel_upcoming}
M.~Gaberdiel, B.~Knighton and J.~Vosmera, \emph{{to appear}}, .

\bibitem{Cardy:2004hm}
J.~L. Cardy, \emph{{Boundary conformal field theory}},
  \href{https://arxiv.org/abs/hep-th/0411189}{{\ttfamily hep-th/0411189}}.

\bibitem{ishibashi1989boundary}
N.~Ishibashi, \emph{The boundary and crosscap states in conformal field
  theories}, {\emph{Modern Physics Letters A} {\bfseries 4} (1989) 251}.

\bibitem{Takayanagi:2011zk}
T.~Takayanagi, \emph{{Holographic Dual of BCFT}},
  \href{https://doi.org/10.1103/PhysRevLett.107.101602}{\emph{Phys. Rev. Lett.}
  {\bfseries 107} (2011) 101602}
  [\href{https://arxiv.org/abs/1105.5165}{{\ttfamily 1105.5165}}].

\bibitem{Ginsparg:1988ui}
P.~H. Ginsparg, \emph{{APPLIED CONFORMAL FIELD THEORY}},  in \emph{{Les Houches
  Summer School in Theoretical Physics: Fields, Strings, Critical Phenomena Les
  Houches, France, June 28-August 5, 1988}}, pp.~1--168, 1988,
  \href{https://arxiv.org/abs/hep-th/9108028}{{\ttfamily hep-th/9108028}}.

\bibitem{Bantay:1997ek}
P.~Bantay, \emph{{Characters and modular properties of permutation orbifolds}},
  \href{https://doi.org/10.1016/S0370-2693(97)01464-0}{\emph{Phys. Lett. B}
  {\bfseries 419} (1998) 175}
  [\href{https://arxiv.org/abs/hep-th/9708120}{{\ttfamily hep-th/9708120}}].

\bibitem{Burrington:2018upk}
B.~A. Burrington, I.~T. Jardine and A.~W. Peet, \emph{{The OPE of bare twist
  operators in bosonic $S_N$ orbifold CFTs at large $N$}},
  \href{https://doi.org/10.1007/JHEP08(2018)202}{\emph{JHEP} {\bfseries 08}
  (2018) 202} [\href{https://arxiv.org/abs/1804.01562}{{\ttfamily
  1804.01562}}].

\bibitem{Belin:2017nze}
A.~Belin, C.~A. Keller and I.~G. Zadeh, \emph{{Genus two partition functions
  and Renyi entropies of large c conformal field theories}},
  \href{https://doi.org/10.1088/1751-8121/aa8a11}{\emph{J. Phys.} {\bfseries
  A50} (2017) 435401} [\href{https://arxiv.org/abs/1704.08250}{{\ttfamily
  1704.08250}}].

\bibitem{Recknagel:2002qq}
A.~Recknagel, \emph{{Permutation branes}},
  \href{https://doi.org/10.1088/1126-6708/2003/04/041}{\emph{JHEP} {\bfseries
  04} (2003) 041} [\href{https://arxiv.org/abs/hep-th/0208119}{{\ttfamily
  hep-th/0208119}}].

\bibitem{kerov2003asymptotic}
S.~Kerov, N.~Tsilevich, A.~M. Society and N.~Tsilevich, \emph{Asymptotic
  Representation Theory of the Symmetric Group and Its Applications in
  Analysis}, Translations of mathematical monographs. American Mathematical
  Society, 2003.

\bibitem{Balasubramanian:2005mg}
V.~Balasubramanian, J.~de~Boer, V.~Jejjala and J.~Simon, \emph{{The Library of
  Babel: On the origin of gravitational thermodynamics}},
  \href{https://doi.org/10.1088/1126-6708/2005/12/006}{\emph{JHEP} {\bfseries
  12} (2005) 006} [\href{https://arxiv.org/abs/hep-th/0508023}{{\ttfamily
  hep-th/0508023}}].

\bibitem{Zuber:2000qs}
J.~B. Zuber, \emph{{Conformal boundary conditions}},  in \emph{{4th APCTP
  Winter School on Integrable Quantum Field Theories and Applications}},
  pp.~318--344, 2, 2000.

\bibitem{Belin:2020nmp}
A.~Belin, N.~Benjamin, A.~Castro, S.~M. Harrison and C.~A. Keller,
  \emph{{$\mathcal{N}=2$ Minimal Models: A Holographic Needle in a Symmetric
  Orbifold Haystack}},
  \href{https://doi.org/10.21468/SciPostPhys.8.6.084}{\emph{SciPost Phys.}
  {\bfseries 8} (2020) 084} [\href{https://arxiv.org/abs/2002.07819}{{\ttfamily
  2002.07819}}].

\bibitem{Benjamin:2021zkn}
N.~Benjamin, C.~A. Keller and I.~G. Zadeh, \emph{{Lifting 1/4-BPS States in
  $AdS_3\times S^3 \times T^4$}},
  \href{https://arxiv.org/abs/2107.00655}{{\ttfamily 2107.00655}}.

\bibitem{Eberhardt:2018ouy}
L.~Eberhardt, M.~R. Gaberdiel and R.~Gopakumar, \emph{{The Worldsheet Dual of
  the Symmetric Product CFT}},
  \href{https://doi.org/10.1007/JHEP04(2019)103}{\emph{JHEP} {\bfseries 04}
  (2019) 103} [\href{https://arxiv.org/abs/1812.01007}{{\ttfamily
  1812.01007}}].

\end{thebibliography}\endgroup

\end{document}